\newtheorem{lemma}{Lemma}
\newtheorem{definition}{Definition}
\newtheorem{corollary}{Corollary}
\newtheorem{theorem}{Theorem}
\def\BibTeX{{\rm B\kern-.05em{\sc i\kern-.025em b}\kern-.08em
    T\kern-.1667em\lower.7ex\hbox{E}\kern-.125emX}}
\begin{document}
\doi{10.1109/ACCESS.2020.2986131}

\title{Optimal Mapper for OFDM with Index Modulation: A Spectro-Computational Analysis}

\author{\uppercase{Saulo Queiroz}\authorrefmark{1,3}, 
\uppercase{Joao P. Vilela\authorrefmark{2}, and Edmundo Monteiro}\authorrefmark{3}
\IEEEmembership{Senior Member, IEEE}
}
\address[1]{Academic Department of Informatics,
          Federal University of Techonology (UTFPR), Ponta Grossa, PR, Brazil.
         (e-mail: sauloqueiroz@utfpr.edu.br)}
\address[2]{Department of Computer Science, University of Porto, Portugal.
         (e-mail: jvilela@fc.up.pt)}
\address[3]{CISUC and Department of Informatics Engineering,
University of Coimbra, Portugal 
(e-mail: \{saulo, jpvilela, edmundo\}@dei.uc.pt)}
\tfootnote{This work is partially supported by the European Regional Development Fund (FEDER), through the Regional Operational Programme of Lisbon (POR LISBOA 2020) and the Competitiveness and Internationalization Operational Programme (COMPETE 2020) of the Portugal 2020 framework [Project 5G with Nr. 024539 (POCI-01-0247-FEDER-024539)], and is also partially supported by the CONQUEST project - CMU/ECE/0030/2017 Carrier AggregatiON between Licensed Exclusive and Licensed Shared Access FreQUEncy BandS in HeTerogeneous Networks with Small Cells.}

\markboth
{Saulo Queiroz \headeretal: Optimal Mapper for OFDM with Index Modulation: A Spectro-Computational Analysis}
{Saulo Queiroz \headeretal: Optimal Mapper for OFDM with Index Modulation: A Spectro-Computational Analysis}

\corresp{Corresponding author: Saulo Queiroz (e-mail: sauloqueiroz@utfpr.edu.br).}

\begin{abstract}
In this work, we present an optimal mapper for OFDM with index modulation (OFDM-IM). 
By optimal we mean the mapper achieves the lowest possible asymptotic computational 
complexity (CC) when the spectral efficiency (SE) gain over OFDM maximizes. 
We propose the spectro-computational (SC) analysis to capture the trade-off between
CC and SE and to demonstrate that an $N$-subcarrier OFDM-IM mapper must run in exact 
$\Theta(N)$ time complexity.
We show that an OFDM-IM mapper running faster than such complexity cannot reach the maximal 
SE whereas one running slower nullifies the mapping throughput for arbitrarily large $N$.
We demonstrate our theoretical findings by implementing an open-source library that supports
all DSP steps to map/demap an $N$-subcarrier complex frequency-domain OFDM-IM symbol.
Our implementation supports different index selector algorithms and is the first to
enable the SE maximization while preserving the same time and space asymptotic
complexities of the classic OFDM mapper.
\end{abstract}

\begin{keywords}
Computational Complexity, Index Modulation, OFDM, Signal mapping, Software-defined radio,
Spectral Efficiency.
\end{keywords}

\titlepgskip=-15pt

\maketitle
\begin{table}[]
\centering
\begin{tabular}{rl}
\multicolumn{2}{c}{\color{accessblue} NOTATION} \\
$c_i$: & Index of the $i$-th active subcarrier in the symbol \\
$g$: & Number of subblocks per symbol \\
$k$: & Number of active subcarriers \\
$m$: & Total number of bits per symbol \\
$m(N)$: & Asymptotic number of bits per symbol as function of $N$ \\
$n$: & Number of subcarriers per subblock \\
$p$: & Total number of bits per subblock \\
$p_1$: & Number of index modulation bits per subblock\\
$p_2$: & Number of bits per active subcarriers in a subblock\\
$\delta$: & Half-width of the confidence interval\\
$x$: & Number of samples of the steady-state mean\\
\textbf{s}: & List of baseband samples per symbol\\
\textbf{s}$_\beta$: & List of baseband samples in the $\beta$-th subblock\\
$A_{N,k}$: & $N\times k$ Johnson association scheme\\
$I$: & List of active subcarrier indexes per symbol\\
$I_\beta$: & List of active subcarrier indexes in the $\beta$-th subblock\\
$N$: & Number of subcarriers per symbol  \\
$M$: & Constellation size of the modulation diagram\\
$P_1$: & Number of index modulation bits per symbol\\
$P_2$: & Number of bits per active subcarriers in a symbol\\
$X$: & Decimal representation of the $P_1$-bit mapper input\\
$T(N)$: & (De)Mapper computational complexity as function of $N$ \\
$m(N)/T(N)$: & (De)Mapper spectro-computational throughput \\
${N\choose k}$: & $N!/(k!(N-k)!)$ \\
$\kappa$: & Wall-clock runtime of a computational instruction\\
$o(f)$: & Order of growth asymptotically smaller than $f$ \\
$\omega(f)$: & Order of growth asymptotically larger than $f$ \\
$O(f)$: & Order of growth asymptotically equal or smaller than $f$ \\
$\Omega(f)$: & Order of growth asymptotically equal or larger than $f$ \\
$\Theta(f)$: & Order of growth asymptotically equal to $f$ \\
$Z^T$: & Transpose of the matrix $Z$\\
\end{tabular}
\end{table}
\section{Introduction}\label{sec:introduction}
\PARstart{I}ndex Modulation (IM) is a physical layer technique that can improve
the spectral efficiency (SE) of OFDM. IM's basic idea for OFDM~\cite{pc-ofdm-1999,basar-ofdmim-globecom-2012} 
consists in activating $k\in[1,N]$ out of  $N$ subcarriers of the symbol to enable 
extra ${N \choose k}=N!/(k!(N-k)!)$ waveforms.
Of these, OFDM-IM employs $2^{\lfloor\log_2C(N,k)\rfloor}$ to map 
$P_1=\lfloor\log_2{{N} \choose{k}}\rfloor$ bits.  
Besides, modulating the $k$ active subcarriers with an $M$-ary constellation, 
the OFDM-IM symbol can transmit more $P_2=\log_2M$ bits along with $P_1$.
Thus, the OFDM-IM mapper takes a total of $m=P_1+P_2$ bits as input and gives
$k$ complex baseband samples as output for the modulation of the $k$ subcarriers.
In this process, the index selector (IxS) determines the $k$-size list of indexes 
-- out of $2^{P_1}$ possibles -- from the $P_1$-bit input. The remainder $N - k$ 
subcarriers are nullified. The other DSP steps follow as usual in OFDM, except for 
the signal detector at the receiver. In this sense, several research efforts have 
been done to improve the receiver's bit error rate at low computational
complexity~\cite{hu-lowcomp-ofdmim-wcl-2018, sandell-Mlineardetect-commlet-2016, 
siddiq-all_patterns-lettes-2016, zheng-lc_ofdmgim2-2015, basar-ofdmim-transac-2013}. 
Since our focus is on the OFDM-IM mapper, 
we refer the reader to the survey works~\cite{mao-survey-ofdmim-2018, 
basar-thesurvey-2017, sugiura-ofdm_im_spacetimefreq_survey-2017, 
ishikawa-will_OFDMIM_work-2016} for other aspects of the index modulation technique.

\subsection{Problem}
In this work, we concern about whether the OFDM-IM mapper can reach the maximal SE gain 
over its OFDM counterpart keeping the same computational complexity (CC) asymptotic constraints. 
The SE maximization of OFDM-IM over OFDM happens when the IM technique is 
applied on all $N$ subcarriers of the symbol with $k=N/2$ and the active subcarriers 
are BPSK-modulated, i.e., $M=2$~\cite{fan-ofdm_gim1-globecom-2013,fan-ofdmgim3-2015}. 
We refer to this setup as the optimal OFDM-IM configuration. 

The computational complexity of the OFDM-IM mapper under the optimal SE configuration
has been conjectured as an ``impossible task''~\cite{lu-compressiveim-2018,basar-thesurvey-2017}.
This belief comes from the fact that the number of mappable OFDM-IM waveforms grows as fast as
$O({N\choose k})$, which becomes exponential if the optimal SE configuration 
is allowed. Indeed, according to the theory of computation,
a problem of size $N$ is computationally intractable if its time complexity lower bound
is $\Omega(2^N)$. Despite that, as far as we know, \emph{the CC lower
bound required to sustain the maximal SE gain of OFDM-IM remains an open question across
the literature}. Consequently, no prior work can answer whether the OFDM-IM mapper indeed 
needs more asymptotic computational resources than its OFDM counterpart to sustain 
the maximal SE gain.

\subsection{Related Work}
In this subsection, we review the literature related to the design and computational
complexity of the OFDM-IM mapper. 

\subsubsection{Early Attempt}
The earliest mapper for OFDM-IM we find is  due to~\cite{pc-ofdm-1999}.
The authors suggest a Look-Up Table (LUT) to map $P_1$ bits into one
out of  $2^{P_1}$ unique waveforms for relatively small $P_1$.
To avoid the exponential increase in storage implied by the optimal SE 
configuration, the  authors employ a Johnson association scheme~\cite{jas-1978} 
to map $P_1$ based on the recursive matrix 
$A_{N,k}=[ [1$ $0]^T [A_{N-1,k-1}$~$A_{N-1,k}]^T]$,
in which $Z^T$ is the transpose of a given matrix $Z$. Those authors remark that the 
matrix indexes decrease linearly with $N$ towards the base case of recursion.
However, we remark that the overall CC to write
all rows of $A_{N,k}$ is exponential under the optimal SE configuration.
To verify that, consider firstly that $A_{N,k}$ can be lower-bounded 
by $A_{k,k}$, since $k\leq N$. To build $A_{k,k}$, one needs at least two computational
instructions to write the numbers $1$ and $0$ and two other independent and 
distinct recursive calls $A_{k-1,k-1}$ and $A_{k-1,k}$. In the worst-case analysis,
the number of computational steps $T$ to write all entries of $A_{k,k}$
can be captured by the recurrence $T(k)=2+2T(k-1)$, which is trivially verified as $\Omega(2^k)$.
Under the optimal SE setup, the proposed recursive scheme is $\Omega(2^N)$.

\subsubsection{Sub-block Partitioning}
To handle the OFDM-IM mapping overhead, Basar et al.~\cite{basar-ofdmim-transac-2013, 
basar-ofdmim-globecom-2012} propose the subblock partitioning (SP) approach. According
to the survey work of~\cite{basar-thesurvey-2017},  SP and the IxS algorithm presented 
by~\cite{basar-ofdmim-transac-2013, basar-ofdmim-globecom-2012} were (along with a low 
complexity detector) the distinctive methods responsible to release the true potential of 
the IM scheme, thereby shaping the family of index modulation waveforms as we know today.
The key idea of SP is to attenuate the mapper CC by restricting the application of 
the IM technique to smaller portions of the symbol called ``subblocks''. 
The length $n=\lfloor N/g\rfloor$ of each 
subblock depends on the number $g$ of subblocks, which is a configuration parameter of 
OFDM-IM. Increasing $g$, decreases $n$, which causes the complexity of the IxS algorithm 
to decrease too. This way, SP introduces a trade-off between SE and CC, since
the number of OFDM-IM waveforms increases for lower $g$~\cite{basar-ofdmim-transac-2013, 
basar-ofdmim-globecom-2012}. Thus, setting $g=1$  (i.e., deactivating SP) means maximizing 
the SE efficiency. SP has represented the state of the art approach to balance SE and CC 
across the family of IM-based multi-carrier waveforms~\cite{yoon-hammingmapper-2019,
li-opportunisticofdmim-2019, li-layered-2019, newdesign-dm-kim-2019, shi-ofdm-aim-2019, jaradat-number-modulation-2018, 
mao-survey-ofdmim-2018, lu-compressiveim-2018,adaptive-dm-2018, wen-mmode-2018, wen-g-mm-ofdm-im-2018, mao-dm_ofdm_im-ieeeaccess-2017,wen-mmode_ofdm_im-tran_comm-17,ozturk-gfdm_im-ieee_access-17,dm-ofdm-im-cpa-2017,
mao-gdm_ofdm_im-commletters-2017, gokceli-realofdmimieeeaccess-2017, 
basar-thesurvey-2017,fan-ofdmgim2-iet-2016, fan-ofdm_gim1-globecom-2013, fan-ofdmgim3-2015}.

\subsubsection{(Un)Ranking Algorithms}
The IxS algorithm is a mandatory part for the asymptotic analysis of the OFDM-IM mapper.
As observed by authors in~\cite{basar-ofdmim-transac-2013, basar-ofdmim-globecom-2012},
the IxS task at the OFDM-IM transmitter (receiver) can be implemented as an unranking 
(ranking) algorithm. 
By reviewing the literature in combinatorics, one can find out several different 
(un)ranking algorithms, running at different time complexities~\cite{parque-2018, 
shimizu-unrakingsmalk-2014, combinadic-2014, molinero-2001, kreher-1999, kokosinski-1995, 
chen-parallel-1986, er-linearunranking-1985, buckles-comb-1977}. At a first glance,
building the optimal OFDM-IM mapper may just be a matter of adopting the IxS algorithm that
establishes the complexity upper-bound for the (un)ranking problem, i.e., the 
fastest currently known algorithm. However, in the particular domain of OFDM-IM, 
\emph{$k$ represents a trade-off between SE and CC}. Thus, because the literature in
pure combinatorics does not concern about SE as a performance indicator, 
it does not suffice to guide the design of an optimal OFDM-IM mapper.
Therefore, to the best of our knowledge, \emph{no prior analysis concerns about
the OFDM-IM mapper complexity minimization under the constraint of the SE maximization}. 

\subsubsection{Novel SP-Free OFDM-IM Mappers}
In~\cite{salah-2019}, the authors propose the concept of sparsely indexing modulation
to improve the trade-off between SE and energy efficiency of OFDM-IM. Because this concept 
imposes $k$ to be much less than $N$, the authors rely on~\cite{kokosinski-1995} 
to perform IxS in $O(k\log N)$ time. With the achieved time complexity reduction, 
the authors present the first SP-free OFDM-IM mapper. However, the constraint on the 
value of $k$ prevents the $SE$ maximization. 
To identify the largest tolerable computational complexity to support the maximal 
SE, in a prior work~\cite{queiroz-cost-ixs-19} we present the spectro-computational 
efficiency~(SCE) analysis.  We define the SC throughput of an $N$-subcarrier mapper 
as the ratio $m(N)/T(N)$ (in bits per computational instructions\footnote{or seconds, 
given the time each instruction takes in a particular computational apparatus e.g. FPGA, 
ASIC.}), where $T(N)$ is the mapper's asymptotic complexity to map $m(N)$ bits into an
$N$-subcarrier complex OFDM symbol. From this, the largest computational complexity $T(N)$ 
must satisfy $\lim_{N\to\infty}m(N)/T(N)>0$, i.e., the SC throughput must not nullify
as the system is assigned an arbitrarily large amount of spectrum. 
Based on that, in~\cite{queiroz-wcl-19} we present  the first mapper that supports 
all $2^{\lfloor\log_2 {N\choose N/2}\rfloor}$ waveforms of OFDM-IM in the same asymptotic 
time of the classic OFDM mapper. However, that proposed mapper still requires an extra 
space of $\Theta(N^2)$ look-up table entries in comparison to the classic OFDM mapper. 

\subsection{Our Contribution}
In this work, we build upon~\cite{queiroz-cost-ixs-19} and~\cite{queiroz-wcl-19}
to demonstrate the first asymptotically optimal OFDM-IM mapper. By optimal, we mean
our mapper enables all $2^{\lfloor\log_2 {N\choose N/2}\rfloor}$ waveforms of OFDM-IM
under the same asymptotic time and space complexities of the classic OFDM mapper.
Thus, we enhance our prior work~\cite{queiroz-wcl-19} by reducing the space 
complexity of the mapper from $\Theta(N^2)$ to $\Theta(N)$. Besides, we enhance
the upper-bound analysis of~\cite{queiroz-cost-ixs-19} by also showing the corresponding
asymptotic lower-bounds that holds for any OFDM-IM implementation.
In summary, we achieve the following contributions:
\begin{itemize}
  \item We derive the general OFDM-IM mapper lower-bound $\Omega(k\log_2 M + \log_2{N \choose k} + k)$
        and show it becomes the same of the classic OFDM mapper under the optimal
         configuration (i.e., $g=1$, $k=N/2$, $M=2$). \emph{This formally proves that
        enabling all OFDM-IM waveforms is not computationally intractable, as previously 
        conjectured~\cite{lu-compressiveim-2018, basar-thesurvey-2017}}; 
    \item  Based on the upper and lower bound we identify, we show that
        the optimal OFDM-IM mapper must run in exact $\Theta(N)$ asymptotic complexity.
             An implementation running above this complexity (i.e. $T(N)=\omega(N)$) nullifies the 
            SC throughput for arbitrarily large $N$, whereas one running below that
            (i.e., $T(N)=o(N)$) prevents the SE maximization;
   \item We present the first worst-case computational complexity analysis of the original 
         OFDM-IM (de)mapper when the maximal SE is allowed. In this context, we show that 
         the OFDM-IM mapper/demapper runs in $O(N^2)$ and becomes more complex than the 
         Inverse Discrete fast Fourier Transform (IDFT)/DFT algorithm;
   \item We present an OFDM-IM mapper that runs in $\Theta(N)$ time;
   \item We implement an open-source library that supports all steps to map/demap an
         $N$-subcarrier complex frequency-domain OFDM-IM symbol. In our library,
         the IxS block is implemented with C++ \emph{callbacks}
         to enable flexible addition of other unranking/ranking 
         algorithms in the mapper. This facilitates the enhancement of currently supported algorithms
         to consider aspects not studied  in this work, e.g.~equiprobable IM waveforms~\cite{wen-equiprobable-2016},
         Hamming distance minimization~\cite{yoon-hammingmapper-2019}.
         Based on our theoretical findings, our OFDM-IM mapper library is the first implementation
         that enables the OFDM-IM SE maximization while consuming the same time and space asymptotic   
          complexities of the classic OFDM mapper.
\end{itemize}

\subsection{Organization of Work}
The remainder of this work is organized as follows.
In Section~\ref{sec:model}, we present the system model 
and the assumptions of our work. In Section~\ref{sec:scaling},
we present the computational complexity scaling laws of the OFDM-IM
mapper, namely, the lower and upper CC bounds under maximal SE.
In Section~\ref{sec:throughput}, we analyze the throughput of
the original OFDM-IM mapper. Because such analysis requires the
IxS complexity, in that section we also analyze the CC of the 
original IxS algorithm and show how to achieve
the lowest possible CC under the maximal SE.
In Section~\ref{sec:practical}, we present a practical case
study to validate our theoretical findings. Finally, in 
Section~\ref{sec:conclusion}, we conclude our work and
point future directions.

\section{System Model and Assumptions}\label{sec:model}
In this section, we review the OFDM-IM mapper (subsection~\ref{subsec:ofdmim})
and present its required design for SE maximization (subsection~\ref{subsec:optimalmapper}).
In subsection~\ref{subsec:asymptotic}, we present the assumptions to determine the
lower and upper bound complexities for the OFDM-IM mapper.

\begin{figure*}[t]
    \begin{subfigure}{.55\textwidth}
     \centering
      \includegraphics[width=\linewidth]{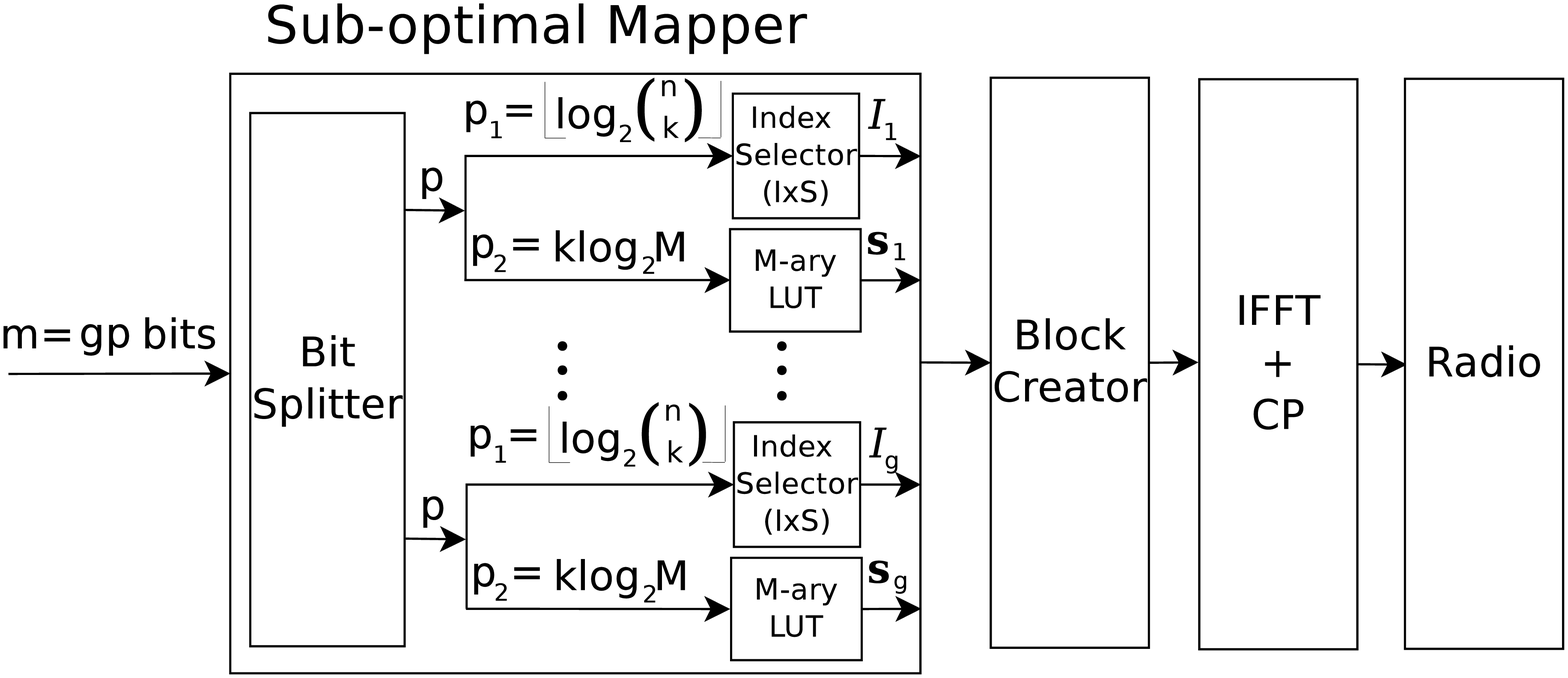}
      \caption{OFDM-IM Waveform.\label{fig:ofdmimmapper}}
    \end{subfigure}
\hfill
    \begin{subfigure}{.4\textwidth}
     \centering
      \includegraphics[width=\linewidth]{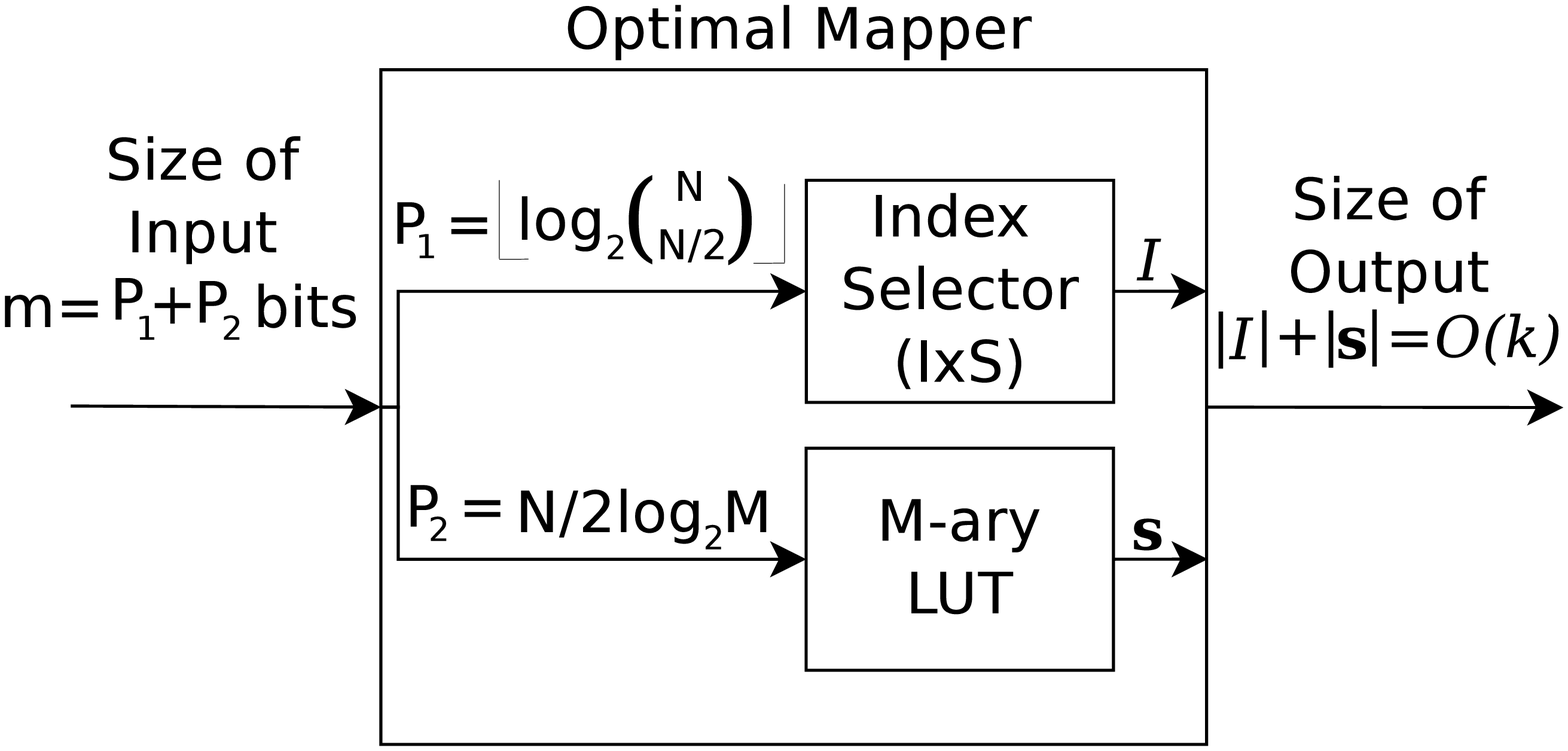}
      \caption{Optimal Mapper: $k=N/2$ and $g=1$.\label{fig:optimalmapper}}
    \end{subfigure}
      \caption{The OFDM-IM block diagram (Fig.~\ref{fig:ofdmimmapper}) mitigates
the mapping computational complexity by subdividing the symbol into $g$ small
subblocks. To maxizimize the spectral efficiency (SE) gain over OFDM, the mapper
has to set $g=1$ and $k=N/2$ (Fig.~\ref{fig:optimalmapper}).
We prove such optimal mapper can be implemented under the same time and space
asymptotic complexities of the classic OFDM mapper.}
\end{figure*}

\subsection{OFDM-IM Background}~\label{subsec:ofdmim}
The SP mapping approach~\cite{basar-ofdmim-globecom-2012, basar-ofdmim-transac-2013} 
is responsible for the main changes OFDM-IM causes to the classic OFDM transmitter block 
diagram (as illustrated in Fig.~\ref{fig:ofdmimmapper}). SP is characterized by
the configuration parameter $g\geq 1$, which stands for
the number of subblocks within the $N$-subcarrier OFDM-IM symbol. 
Each subblock has $n=\lfloor N/g\rfloor$ subcarriers out of which $k$ must 
be active. Considering an $M$-point modulator for the active subcarriers, 
each subblock maps  $p=p_1+p_2=k\log_2 M + \lfloor \log_2 {n\choose k}\rfloor$ bits 
and the entire symbol has $gp$ bits. The IxS algorithm of the $\beta$-th
subblock ($\beta=1, \dots, g$) is fed with $p_1=\lfloor \log_2 {n\choose k}\rfloor$ 
bits and outputs vector $I_\beta$, the $k$-size vector containing the indexes 
of the subcarriers that must be active in the $\beta$-th subblock. 
To modulate the $k$ active subcarriers, the ``$M$-ary modulator'' step takes
the remainder $p_2 = k\log_2 M$ bits as input and outputs 
the vector \textbf{s}$_\beta$, which consists of $k$ complex baseband signals taken 
from an $M$ constellation diagram. Then, each subblock forwards $2k$ values 
(i.e., $|$\textbf{s}$_\beta|+|I_\beta|$ ) to the ``OFDM block creator'', which
refers to \textbf{s}$_\beta$ and $I_\beta$ to modulate the $k$ active 
subcarriers in each subblock and build the full $N$-subcarrier frequency domain OFDM-IM symbol.
The remaining steps proceed as usual in OFDM~\cite{proakis2008digital}.

\subsection{Optimal OFDM-IM Mapper Design}~\label{subsec:optimalmapper}
A requisite to maximize the OFDM-IM SE is to deactivate SP 
(i.e., set $g$ to $1$) and $k$ to $N/2$~\cite{basar-ofdmim-transac-2013}.
In theory, achieving the maximal SE is just a matter of setting OFDM-IM
with the proper parameters. Indeed, by setting $g$ to $1$ (i.e., deactivating
SP) and $k$ to $N/2$, the resulting mapper (Fig.~\ref{fig:optimalmapper})
enables all $2^{P_1}$ waveforms of OFDM-IM~\cite{basar-ofdmim-transac-2013}. 
However, the authors of the original OFDM-IM waveform recommend avoiding
the ideal setup because of the resulting computational complexity (compared 
with the classic OFDM mapper).  In fact, by looking at Fig.~\ref{fig:optimalmapper}, 
one may observe that the ideal OFDM-IM mapper can be seen as a classic OFDM mapper 
with the addition of the IxS step. Because of this extra-step, the optimal
OFDM-IM mapper requires more computational steps than its OFDM counterpart.
However, our rationale is that, \emph{if one can design an OFDM-IM mapper under 
the same asymptotic computational complexity of the classic OFDM mapper, then 
the extra computational operations required by the OFDM-IM mapper (compared 
to OFDM's) are bounded by a constant even for arbitrarily large $N$}. 
Since the IxS complexity is not affected by $M$, without loss of generality, 
in this work we adopt $M=2$ to achieve the largest gain in comparison to the OFDM 
counterpart~\cite{fan-ofdm_gim1-globecom-2013,fan-ofdmgim3-2015}.
We refer to this as the optimal OFDM-IM setup.


\subsection{Asymptotic Analysis of Multicarrier Mappers}~\label{subsec:asymptotic}
We study the scaling laws of the OFDM-IM mapper
as a function of the number $N$ of subcarriers. In particular, for an
$N$-subcarrier OFDM-IM symbol, we study the number $m(N)$ of bits per 
symbol and the mapper's computational complexity $T(N)$ to map these 
bits into $N$ complex baseband samples. We concern about the minimum and 
maximum  asymptotic number of computational instructions required by 
any OFDM-IM mapper implementation.
For this end, we employ the asymptotic notation as usual in the 
analysis of algorithms~\cite{cormen-2009}.
Our asymptotic analysis assumes the classic Random-Access Machine (RAM) model 
which is shown to be equivalent to the universal Turing
 machine~\cite{cook-rammodel-1972}. The RAM model focus on counting the
amount of basic computational instructions (e.g., data reading, data writing, 
basic arithmetic, data comparison) regardless of the technology of the
underlying computational apparatus. For example, based on the RAM model,
one verifies that a classic $N$-subcarrier BPSK-modulated OFDM mapper needs to 
perform $N$ basic computational instructions of data reading, 
each as wide as $\log_2 2$ bits.
This imposes a minimum of $\Omega(N)$ basic reading operations, regardless
of a serial or parallel implementation. Of course, performing these instructions 
in parallel yields more efficient runtime than performing them on a single processor. 
Anyway, the resources consumed by the parallel solution must scale on the 
derived computational complexity. Besides, for each reading, $N$ independent 
baseband samples must feed $N$ variables in the input of the IDFT DSP block, 
demanding a minimum space of $\Omega(N)$ complex variables.

\section{Index Modulation Mapping Complexity Bounds}~\label{sec:scaling}
In this section, we derive the CC lower and upper 
bounds for an OFDM-IM mapper implementation through asymptotic 
analysis as a function of the number of subcarriers $N$.

\subsection{OFDM-IM Mapping Time Complexity Lower Bound}
To derive the general asymptotic lower bound for any
 OFDM-IM implementation, we refer to Fig.~\ref{fig:optimalmapper}. 
Recall we are considering an SP-free mapper design (i.e., $g=1$) to enable 
the IM principle on the entire $N$-subcarrier OFDM-IM symbol. In this case, 
the lower bound is readily derived by observing that any implementation
needs at least $m$ basic computational steps to read the binary input 
to be mapped. Also,  $O(k)$ basic computational steps are required to write 
the baseband samples in the mapper's output. Based on this, 
in Lemma~\ref{lem:generallowerbound} we derive the general CC lower 
bound for any OFDM-IM mapper implementation.

\begin{lemma}[OFDM-IM Mapper General CC Lower Bound]~\label{lem:generallowerbound}
The minimum number of computational steps of any OFDM-IM mapper implementation
is $\Omega(k\log_2 M + \lfloor \log_2 {N \choose k}\rfloor + k)$.
\end{lemma}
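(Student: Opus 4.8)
\emph{Proof proposal.} The plan is to view the mapper as a function of its $m$-bit input and to lower-bound the number of RAM instructions it must execute by counting separately the work forced by \emph{consuming} its input and the work forced by \emph{producing} its output, then checking that these two contributions add up to constants. Throughout this section we are in the SP-free regime $g=1$, so the mapper consumes exactly $m=P_1+P_2=\lfloor\log_2{N\choose k}\rfloor+k\log_2 M$ bits and emits $k$ complex baseband samples, one per active subcarrier.

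For the input side, I would first argue that a correct OFDM-IM mapper is injective on its $2^{m}$ admissible inputs: distinct $P_1$-bit words must select distinct index patterns (only $2^{P_1}$ of the ${N\choose k}$ patterns are used) and distinct $P_2$-bit words must select distinct $k$-tuples of constellation points, so distinct inputs yield distinct waveforms. Consequently, in any RAM program computing the mapper, all $m$ input bits must be read on every input: if some bit position were never accessed on input $x$, the program would behave identically on $x$ and on the input obtained by flipping that bit, contradicting injectivity. Since the RAM model fixed in Section~\ref{subsec:asymptotic} counts one basic instruction per read of $O(1)$ bits, this forces $\Omega(m)$ instructions, independently of a serial or parallel realization.

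For the output side, the $k$ baseband samples are in general mutually independent and feed $k$ distinct variables at the input of the IDFT block, so writing them costs $\Omega(k)$ instructions under the same accounting. Combining the two: the instruction count of the whole computation is at least the number of read instructions plus the number of write instructions, hence at least their maximum, which is at least half their sum; therefore $T=\Omega(m)+\Omega(k)=\Omega(m+k)=\Omega\!\left(k\log_2 M+\lfloor\log_2{N\choose k}\rfloor+k\right)$, which is the claimed bound.

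I expect the only delicate step to be the justification that every input bit must actually be read; that is, promoting correctness (injectivity) of the mapper to the stronger statement that no implementation can skip reading any input bit. The remaining steps are bookkeeping against the cost model already fixed in Section~\ref{subsec:asymptotic}, so I would keep them brief.
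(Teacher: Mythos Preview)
Your proposal is correct and follows essentially the same approach as the paper: lower-bound the mapper by the size of its input ($m=P_1+P_2$ bits to read) plus the size of its output ($\Theta(k)$ values to write). Your injectivity argument for why every input bit must actually be read is a welcome addition of rigor that the paper simply takes for granted; the paper also counts the output as $2k$ (both the index vector $I_1$ and the sample vector $\mathbf{s}_1$) rather than your $k$, but this does not affect the asymptotic bound.
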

\begin{proof}
In the optimal OFDM-IM mapper, $g=1$. Thus, the minimum number of computational 
steps to read the input is $m=P_1+P_2=\lfloor \log_2 {N \choose k}\rfloor+k\log_2 M$.
Further, the OFDM-IM mapper must feed the ``OFDM block creator'' DSP step with
the vectors of the active subcarriers indexes $I_\beta$ and their corresponding
baseband samples \textbf{s}$_\beta$ ($\beta=1, \dots, g$). 
Since the optimal mapper requires $g=1$, 
there is only a single $k$-size vector $I_1$ and another $k$-size vector
\textbf{s}$_1$, yielding to the total output size of $2k=O(k)$. Thus, any OFDM-IM
mapper implementation must write at least $O(k)$ units of data in its output. 
Therefore, because of the computational effort to read (input) and write (output) 
units of data, any OFDM-IM mapper solution will demand at least
$\Omega(k\log_2 M + \lfloor \log_2 {N \choose k}\rfloor + k)$ computational
steps.
\end{proof}

When the optimal OFDM-IM setup is allowed, the general asymptotic lower bound of 
Lemma~\ref{lem:generallowerbound} becomes $\Omega(N)$ (Corollary~\ref{col:optimallowerbound}).
This stems from the fact that the number of index modulated bits $P_1$ approaches
$N-\log_2\sqrt N$ as $N\to\infty$ (Lemma~\ref{lemma:p_1}).
Therefore, although the number of waveforms of the optimal OFDM-IM setup grows 
exponentially on $N$, the CC of the IM mapping problem is not intractable 
(i.e., $\Omega(2^N)$) as previously conjectured~\cite{lu-compressiveim-2018,basar-thesurvey-2017}.

\begin{lemma}[Maximum Number $P_1$ of Index Modulation Bits] \label{lemma:p_1}
The maximum number of index modulated bits $P_1$ approaches 
$N-\log_2\sqrt N$ for arbitrarily large $N$.
\end{lemma}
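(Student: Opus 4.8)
The plan is to start from the definition $P_1 = \lfloor \log_2 \binom{N}{k}\rfloor$ specialized to the optimal setup $k = N/2$, and reduce the claim to a direct application of Stirling's approximation to the central binomial coefficient $\binom{N}{N/2}$. First I would discard the floor at the cost of an additive term of magnitude less than $1$, writing $P_1 = \log_2\binom{N}{N/2} - \epsilon$ with $0 \le \epsilon < 1$, so that it suffices to estimate $\log_2\binom{N}{N/2}$ up to a bounded error.

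Next I would substitute Stirling's formula $n! = \sqrt{2\pi n}\,(n/e)^n(1 + O(1/n))$ into $\binom{N}{N/2} = N!/((N/2)!)^2$. The exponential factors $(n/e)^n$ combine to give exactly $2^N$, and the polynomial prefactors collapse to $\sqrt{2/(\pi N)}$, yielding $\binom{N}{N/2} = 2^N\sqrt{2/(\pi N)}\,(1 + O(1/N))$. Taking base-$2$ logarithms gives $\log_2\binom{N}{N/2} = N - \tfrac12\log_2 N + \tfrac12\log_2(2/\pi) + O(1/N)$, i.e. $\log_2\binom{N}{N/2} = \bigl(N - \log_2\sqrt N\bigr) + c + o(1)$ with $c = \tfrac12\log_2(2/\pi)$ a fixed constant. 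Combining with the floor bound, $P_1 = N - \log_2\sqrt N + O(1)$, which is the asymptotic statement claimed and which drives the $\Omega(N)$ conclusion of Corollary~\ref{col:optimallowerbound}.

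I do not expect a genuine obstacle; the content is a textbook Stirling estimate. The only points requiring care are bookkeeping ones: keeping the relative error $(1 + O(1/n))$ through the ratio so that the logarithm's error is genuinely $O(1/N)$, and being explicit that "approaches" here is meant up to the bounded additive terms $c$ (from Stirling) and $\epsilon$ (from the floor) — the quantity $P_1 - (N - \log_2\sqrt N)$ need not tend to $0$, but it stays within a constant of it, which is all that is used later. If $N$ is odd one replaces $k = N/2$ by $k = \lfloor N/2\rfloor$; this perturbs the prefactor by a bounded factor and leaves the conclusion intact. As a sanity check one may also note the crude bounds $2^N/(N+1) \le \binom{N}{N/2} \le 2^N$, which already give $P_1 = N - \Theta(\log N)$, Stirling being what pins down the coefficient $\tfrac12$ in front of $\log_2 N$.
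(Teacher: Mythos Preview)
Your proposal is correct and follows essentially the same approach as the paper: both arguments rest on the standard asymptotic $\binom{N}{N/2}\sim 2^N/\sqrt{N}$ for the central binomial coefficient and then take a base-$2$ logarithm. The paper simply cites this asymptotic as a known fact, whereas you derive it explicitly from Stirling's formula and are more careful in tracking the bounded additive terms coming from the floor and the constant prefactor; that extra care is welcome but does not change the underlying idea.
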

\begin{proof}
By definition, $P_1 = \lfloor\log_2 {N\choose k}\rfloor$.
If the maximum SE gain of OFDM-IM over OFDM is allowed, ${N\choose k}$
becomes the so-called central binomial coefficient 
${N\choose N/2}$, whose well-known asymptotic growth 
is $O(2^N/\sqrt{N})$\cite{oeis-A001405}.  From this, it follows
that $P_1$ approaches $\log_2 ({2^NN^{-0.5}})=N-\log_2\sqrt N=O(N)$ as $N\to\infty$.
\end{proof}

\begin{corollary}[OFDM-IM Mapper CC Lower Bound under 
Maximal Spectral Efficiency] \label{col:optimallowerbound}
Under the optimal spectral efficiency setup, the general mapping CC 
lower bound of OFDM-IM (Lemma~\ref{lem:generallowerbound}) becomes $\Omega(N+P_1)$, 
which is the same of OFDM, i.e., $\Omega(N)$.
\end{corollary}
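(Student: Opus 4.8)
The plan is to specialize the general bound of Lemma~\ref{lem:generallowerbound} to the optimal configuration $g=1$, $k=N/2$, $M=2$, show that every surviving term is $\Theta(N)$ so that their sum collapses to $\Omega(N)$, and then recall from Section~\ref{subsec:asymptotic} that the classic OFDM mapper already carries an $\Omega(N)$ lower bound, which matches.

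First I would substitute $M=2$, so the modulation term $k\log_2 M$ becomes $k=N/2$. Next I would identify $\lfloor\log_2{N\choose k}\rfloor$ with $P_1$ and invoke Lemma~\ref{lemma:p_1}, by which $P_1$ approaches $N-\log_2\sqrt N$; in particular $P_1=\Theta(N)$ and hence $P_1=O(N)$. The third term is again $k=N/2=\Theta(N)$. Adding the three, the bound $\Omega\!\left(k\log_2 M + \lfloor\log_2{N\choose k}\rfloor + k\right)$ becomes $\Omega(N/2 + P_1 + N/2)=\Omega(N+P_1)$, and since $P_1=O(N)$ the dominant term is $N$, so this equals $\Omega(N)$. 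Finally I would close the loop on the OFDM side: by the RAM-model argument in Section~\ref{subsec:asymptotic}, a classic $N$-subcarrier (BPSK) OFDM mapper must perform at least $N$ data-reading instructions, one per subcarrier, and allocate $\Omega(N)$ complex baseband variables at the IDFT input, so its lower bound is also $\Omega(N)$; hence the optimal OFDM-IM mapper and the classic OFDM mapper share the same asymptotic lower bound.

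The only step needing care is the second one: it is the identification $\lfloor\log_2{N\choose k}\rfloor=P_1$ together with the estimate of Lemma~\ref{lemma:p_1} that makes $N+P_1$ collapse to $N$ rather than to something super-linear; everything else is a routine substitution plus the observation that the floor function does not affect the order of growth. I do not anticipate a genuine obstacle — the substance of the corollary is simply that the exponential \emph{count} $2^{P_1}$ of OFDM-IM waveforms is a red herring once one measures the \emph{bit length} $P_1\approx N-\log_2\sqrt N$ that actually has to be read and written.
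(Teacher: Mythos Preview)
Your proposal is correct and follows essentially the same route as the paper: specialize Lemma~\ref{lem:generallowerbound} to $M=2$, $k=N/2$, invoke Lemma~\ref{lemma:p_1} to get $P_1=O(N)$, and conclude $\Omega(N+P_1)=\Omega(N)$, matching the classic OFDM bound from Section~\ref{subsec:asymptotic}. The only difference is granularity: you spell out the substitution of each of the three terms of Lemma~\ref{lem:generallowerbound} and the RAM-model argument for OFDM, whereas the paper's proof jumps directly to the form $\Omega(N+P_1)$ and cites Lemma~\ref{lemma:p_1} in a single sentence.
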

\begin{proof}
Since $P_1$ approaches $N-\log_2\sqrt{N}=O(N)$ for arbitrarily large 
$N$ (Lemma~\ref{lemma:p_1}), the general asymptotic lower-bound $\Omega(N+P_1)$
 becomes $\Omega(N)$, which is the minimum asymptotic number of computational steps 
performed by the classic OFDM mapper.
\end{proof}

Lemma~\ref{lem:generallowerbound} 
and Corollary~\ref{col:optimallowerbound} imply that it is not possible to 
implement an OFDM-IM mapper with less than $\Omega(N)$ computational steps 
without sacrificing the SE optimality 
(Corollary~\ref{col:lowerboundtradeoff}).  The corollary~\ref{col:lowerboundtradeoff} 
states that any OFDM-IM mapper running in sub-linear complexity, i.e., $k=o(N)$ (which 
excludes the ideal $k=N/2$), prevents the maximal SE gain over OFDM.
However, sub-optimal SE setups can be useful for sparse OFDM-IM 
systems, in which one gives up the maximal throughput on behalf 
of energy consumption minimization~\cite{salah-2019}.

\begin{corollary}[OFDM-IM Mapper Spectro-Computational Lower-Bound 
Trade-Off]~\label{col:lowerboundtradeoff}
No OFDM-IM mapper implementation can maximize the spectral efficiency (SE)
gain over OFDM while running in $o(N)$ computational steps.
\end{corollary}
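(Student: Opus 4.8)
The plan is to prove the corollary by contradiction, essentially as a consequence of the lower bound just established. Suppose some OFDM-IM mapper implementation simultaneously (i)~maximizes the SE gain over OFDM and (ii)~runs in $T(N)=o(N)$ computational steps. The first step is to recall from Subsection~\ref{subsec:optimalmapper} that maximizing the SE gain forces the optimal configuration, namely $g=1$, $k=N/2$, and (without loss of generality for the IxS cost) $M=2$; in particular the IM principle is applied to the whole $N$-subcarrier symbol and the number of active subcarriers $k$ scales linearly with $N$.

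Next I would invoke Lemma~\ref{lem:generallowerbound}, which states that any implementation requires $\Omega(k\log_2 M + \lfloor\log_2 {N\choose k}\rfloor + k)$ steps, together with Lemma~\ref{lemma:p_1}, which gives $P_1=\lfloor\log_2 {N\choose N/2}\rfloor \to N-\log_2\sqrt N$ and hence $P_1=\Theta(N)$. Substituting the optimal parameters $g=1$, $k=N/2$, $M=2$ into the bound of Lemma~\ref{lem:generallowerbound} makes it collapse to $\Omega(N)$, exactly as in Corollary~\ref{col:optimallowerbound}. Therefore the hypothesized implementation must perform $\Omega(N)$ computational steps.

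Finally I would close the contradiction using the elementary fact that the growth classes $o(N)$ and $\Omega(N)$ are disjoint: if $T(N)=o(N)$ then $\lim_{N\to\infty}T(N)/N=0$, whereas $T(N)=\Omega(N)$ requires $\liminf_{N\to\infty}T(N)/N>0$, and these cannot both hold. Hence no implementation can satisfy (i) and (ii) simultaneously. Equivalently, any mapper whose running time is sub-linear must have $k=o(N)$, which excludes the ideal $k=N/2$, and therefore cannot attain the maximal SE gain over OFDM.

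I do not expect a genuine technical obstacle here, since the statement is a corollary of the preceding lemmas; the only point that needs care is making the implication ``SE maximization $\Rightarrow$ optimal parameters $g=1$, $k=N/2$'' explicit (so that the $\Omega(N)$ lower bound indeed applies), and ensuring that the hypothesized $o(N)$ bound refers to the same worst-case instruction count $T(N)$ for which the $\Omega(N)$ lower bound was derived, rather than a best-case or amortized count.
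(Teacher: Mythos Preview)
Your proposal is correct and follows essentially the same approach as the paper: invoke the general lower bound of Lemma~\ref{lem:generallowerbound}, observe that SE maximization forces $k=N/2$ and $M=2$ so that the bound collapses to $\Omega(N)$ via Corollary~\ref{col:optimallowerbound}, and conclude that $o(N)$ running time is incompatible. The only cosmetic difference is that you frame the argument as an explicit contradiction (and spell out the disjointness of $o(N)$ and $\Omega(N)$), whereas the paper phrases it directly as a contrapositive.
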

\begin{proof}
The asymptotic number of steps of any OFDM-IM mapper is subject to the general
lower bound of $\Omega(k\log_2 M + \lfloor \log_2 {N \choose k}\rfloor + k)$ 
(Lemma~\ref{lem:generallowerbound}).
Thus, the only way to improve that bound consists of changing the OFDM-IM configuration 
parameters $M$ and $k$. Out of all possible values of $M$ and $k$, the \emph{maximum} SE 
gain of OFDM-IM over OFDM is achieved \emph{only} when $M=2$ and 
$k=N/2$~\cite{fan-ofdm_gim1-globecom-2013,fan-ofdmgim3-2015}. Also, under 
such optimal SE configuration, the general  CC lower bound becomes 
$\Omega(N)$~(Corollary~\ref{col:optimallowerbound}). 
Therefore, an OFDM-IM implementation cannot run bellow this bound (i.e., in sub-linear time)
unless a non-optimal SE configuration is adopted for $k$.
\end{proof}

\subsection{OFDM-IM Mapping Time Complexity Upper Bound}
The CC upper bound of a problem is usually defined as the 
complexity of the fastest currently known algorithm that
solves it~\cite{harel-1987}. 
This definition does not suffice to our study because our asymptotic analysis
is further constrained by the SE maximization. In fact, if the fastest known 
algorithm does not suffice to avoid an increasing bottleneck in the mapping 
throughput as $N$ grows, then its complexity 
cannot be considered suitable to scale the mapper throughput on $N$.
From this, we define the spectro-computational mapper throughput 
(Def.~\ref{def:sct}) and, based on its condition of scalability 
(Def.~\ref{def:sca}),  we derive the required computational 
complexity upper bound for any OFDM-IM mapper implementation (Lemma~\ref{lemma:upperbound}).

\begin{definition}[The Spectro-Computational (SC) Throughput]\label{def:sct}
Let $T(N)$ be the computational complexity (CC) to map $m(N)$ input bits into 
an $N$-subcarrier OFDM-IM symbol. We define $m(N)/T(N)$ in bits per 
computational steps (or seconds), as the spectro-computational (SC) throughput 
of the mapper.
\end{definition}

\begin{definition}[Spectro-Computational Throughput Scalability]\label{def:sca}
The SC throughput $m(N)/T(N)$ of a mapper is not scalable unless the 
inequality~(\ref{eq:condition}) does hold.
\begin{eqnarray}
\lim_{N\to\infty} \frac{m(N)}{T(N)} &>& 0 \label{eq:condition}
\end{eqnarray}
\end{definition}

As a side note about our Def.~\ref{def:sca}, we call attention to the
fact that it consists of the asymptotic analysis. As such, ``time complexity''
means ``amount of computational instructions'' which can be translated to
(but does not necessarily mean) wall clock runtime. That said, we recognize 
that a radio implementation that does not meet our Def.~\ref{def:sca} 
can achieve the same wall clock runtime of another one that does.
However, in this case, the CC $T(N)$ will translate into other relevant 
radio's design performance  indicators. For example, suppose that the 
largest complexity $T(N)$ to satisfy our Def.~\ref{def:sca} in a particular 
DSP study is $O(N)$. A design that violates such a requirement by employing a 
more complex algorithm, let us say $O(N^2)$, can still reach the same wall 
clock runtime of a design that does not. However, since the overall number
of performed computational instructions depends on the algorithm's CC rather 
than the hardware technology, the average wall clock time to run  a single 
computational instruction must be (much) lower in the $O(N^2)$ solution in 
comparison to the $O(N)$ counterpart. This pushes the algorithm's CC to the 
hardware design rather than to the wall clock runtime. Therefore, the SC throughput  
of a radio design that violates our Def.~\ref{def:sca} can scale with $N$ but 
at the expense of impairing other relevant design performance indicators, such 
as the number of hardware components (e.g., logic gates), circuit area, 
energy consumption and manufacturing cost~\cite{blume-socalgorithmdesign-2002}. 

\subsection{Required Complexity for Maximal SE}
Based on~Def.~\ref{def:sca}, in Lemma~\ref{lemma:upperbound}
we show that the upper bound complexity any OFDM-IM mapper 
implementation  must meet to ensure the optimal SE configuration is $O(N)$.

\begin{lemma}[OFDM-IM Mapper Upper Bound under Optimal SE Configuration]\label{lemma:upperbound}
Under the optimal SE configuration, the OFDM-IM mapper
CC must be upper bounded by $O(N)$.
\end{lemma}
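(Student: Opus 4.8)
The plan is to turn the scalability condition of Def.~\ref{def:sca} into a hard constraint on $T(N)$ by first pinning down the \emph{exact} asymptotic order of $m(N)$ under the optimal configuration. First I would write $m(N)$ explicitly: setting $g=1$, $k=N/2$ and $M=2$ (the only parameters that maximize the SE gain over OFDM), the symbol carries
\[
m(N)=P_1+P_2=\left\lfloor\log_2{N\choose N/2}\right\rfloor+\frac{N}{2}\log_2 2=P_1+\frac{N}{2}
\]
bits. By Lemma~\ref{lemma:p_1}, $P_1$ approaches $N-\log_2\sqrt N$ as $N\to\infty$, so $m(N)$ is sandwiched between $N/2$ and roughly $3N/2$ for large $N$; hence $m(N)=\Theta(N)$.

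Next I would feed this into Def.~\ref{def:sca}. Any admissible mapper complexity $T(N)$ must satisfy $\lim_{N\to\infty} m(N)/T(N)>0$. Substituting $m(N)=\Theta(N)$, the ratio becomes $\Theta(N)/T(N)$, whose limit is strictly positive if and only if $T(N)$ grows no faster than linearly, i.e.\ $T(N)=O(N)$. Indeed, if $T(N)=\omega(N)$ then $\Theta(N)/\omega(N)\to 0$, contradicting the scalability requirement and nullifying the SC throughput for arbitrarily large $N$. This establishes the claimed $O(N)$ ceiling.

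Finally I would note the two-sidedness: Corollary~\ref{col:optimallowerbound} already gives the matching $\Omega(N)$ lower bound, so $O(N)$ is simultaneously necessary and the best one can hope for --- and it coincides with the complexity of the classic OFDM mapper. This is exactly what sets up the $\Theta(N)$ characterization derived in Section~\ref{sec:throughput}.

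The main subtlety I expect to have to address is conceptual rather than computational: the ``upper bound'' in this lemma is \emph{not} the complexity of a known algorithm (the usual textbook meaning) but the largest complexity compatible with a non-vanishing SC throughput, so I would be explicit that Def.~\ref{def:sca} is being used as a design constraint. The only genuinely load-bearing quantitative input is $m(N)=\Theta(N)$ (a two-sided bound, not merely $O(N)$): the lower-order term $-\log_2\sqrt N$ from Lemma~\ref{lemma:p_1} and the clean $N/2$ contributed by the BPSK choice $M=2$ are what guarantee $m(N)$ is $\Theta(N)$ and hence that $T(N)=\omega(N)$ really does drive the throughput to zero.
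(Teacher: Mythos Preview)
Your proposal is correct and follows essentially the same route as the paper: invoke Def.~\ref{def:sca} to force $T(N)=O(m(N))$, then use Lemma~\ref{lemma:p_1} together with $P_2=N/2$ to pin $m(N)$ at order $N$, yielding $T(N)=O(N)$. Your write-up is in fact more careful than the paper's own proof---you make explicit that $m(N)=\Theta(N)$ (not merely $O(N)$) and you flag the non-standard meaning of ``upper bound'' here---but the underlying argument is the same.
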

\begin{proof}
To meet the inequality~\ref{eq:condition} of Def.~\ref{def:sca},
$T(N)$ must be asymptotically less or equal than $m(N)$, i.e.,
$T(N)=O(m(N))=O(P_1+P_2)$. Under the optimal SE configuration, 
$k=N/2$ and $P_1=\log_2{N\choose N/2}=O(N)$ bits (Lemma~\ref{lemma:p_1}). 
Therefore, $T(N)$ must be $O(N)$.
\end{proof}

Based on the fact that the required OFDM-IM mapper upper bound complexity
matches its lower bound order of growth in the optimal SE configuration, 
Theorem~\ref{th:imorder} tells us that \emph{the OFDM-IM mapper must run in 
$\Theta(N)$ time}. A solution requiring more asymptotic steps (i.e., $\omega(N)$) 
nullifies the mapper throughput as $N$ grows, whereas one requiring fewer steps 
(i.e., $o(N)$) prevents the SE gain maximization (Corollary~\ref{col:lowerboundtradeoff}).

\begin{theorem}[Required OFDM-IM Mapping Complexity]\label{th:imorder}
If the configuration that maximizes the OFDM-IM spectral efficiency gain
over OFDM is allowed (i.e., $g=1$, $k=N/2$, $M=2$), 
the OFDM-IM mapper block of~\cite{basar-ofdmim-globecom-2012, 
basar-ofdmim-transac-2013} must run in $\Theta(N)$ computational steps.
\end{theorem}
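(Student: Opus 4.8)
The plan is to obtain $\Theta(N)$ by squeezing the admissible mapper complexity $T(N)$ between a lower bound and an upper bound that both collapse to linear order under the optimal configuration $g=1$, $k=N/2$, $M=2$. First I would invoke Corollary~\ref{col:optimallowerbound}, which already established that under this configuration any OFDM-IM mapper implementation must perform $\Omega(N)$ computational steps, simply because it must read the $m=P_1+P_2$ input bits and write the $2k=O(k)=O(N)$ output units, and $P_1$ approaches $N-\log_2\sqrt N=O(N)$ by Lemma~\ref{lemma:p_1}. Hence $T(N)=\Omega(N)$ is forced, and Corollary~\ref{col:lowerboundtradeoff} tells us that any attempt to go strictly below this (i.e., $T(N)=o(N)$) is possible only by abandoning $k=N/2$, which forfeits the maximal SE gain over OFDM.

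Next I would invoke Lemma~\ref{lemma:upperbound}: to keep the spectro-computational throughput $m(N)/T(N)$ from vanishing as $N\to\infty$ (Def.~\ref{def:sca}), one needs $T(N)=O(m(N))=O(P_1+P_2)=O(N)$, using again $P_1=O(N)$ from Lemma~\ref{lemma:p_1} together with $P_2=k\log_2 M=N/2=O(N)$ at $M=2$. Thus an implementation with $T(N)=\omega(N)$ drives $m(N)/T(N)\to 0$, i.e., nullifies the mapping throughput for arbitrarily large $N$, and is therefore inadmissible under Def.~\ref{def:sca}. Combining the two sides, the only orders of growth that simultaneously (i) are reachable while sustaining $k=N/2$ and (ii) preserve a non-vanishing SC throughput are those in $\Omega(N)\cap O(N)=\Theta(N)$. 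Since the mapper of~\cite{basar-ofdmim-globecom-2012,basar-ofdmim-transac-2013} with $g=1$ is exactly the structure of Fig.~\ref{fig:optimalmapper}, this $\Theta(N)$ characterization applies verbatim to it, which completes the argument.

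The step I expect to require the most care is not a calculation but the justification that the two bounds are of the right logical type to be glued into a ``must'': the lower bound is an unconditional RAM-model counting bound (one cannot do less work than reading the input and writing the output), whereas the upper bound is a conditional design constraint (one must not do more work than $O(m(N))$ on pain of a degenerating throughput). I would therefore make explicit that ``must run in $\Theta(N)$'' is shorthand for the conjunction of Corollary~\ref{col:lowerboundtradeoff} ($o(N)$ is ruled out because it forces a sub-optimal $k$) and the scalability requirement of Def.~\ref{def:sca} ($\omega(N)$ is ruled out because it forces $m(N)/T(N)\to 0$), and that the two remaining bounds meet precisely at linear order only because $P_1=\Theta(N)$ under the optimal SE setup --- the single quantitative fact, supplied by Lemma~\ref{lemma:p_1}, on which the whole equality hinges.
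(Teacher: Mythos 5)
Your proposal is correct and follows essentially the same route as the paper's own proof: it squeezes $T(N)$ between the lower bound of Corollaries~\ref{col:optimallowerbound} and~\ref{col:lowerboundtradeoff} (ruling out $o(N)$ without sacrificing $k=N/2$) and the upper bound of Lemma~\ref{lemma:upperbound} via Def.~\ref{def:sca} (ruling out $\omega(N)$ on pain of vanishing SC throughput), with Lemma~\ref{lemma:p_1} supplying $P_1=\Theta(N)$. Your explicit remark on the differing logical nature of the two bounds is a clarification beyond the paper's wording, but the argument itself is the same.
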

\begin{proof}
Corollaries~\ref{col:optimallowerbound} and~\ref{col:lowerboundtradeoff} 
show that any OFDM-IM mapper implementation running with less than $\Omega(N)$ 
computational steps cannot achieve the optimal SE gain over OFDM. In turn,
Lemma~\ref{lemma:upperbound} tells us that the mapper throughput nullifies
for arbitrarily large $N$ if its complexity requires more than $O(N)$ steps.
Therefore, the exact asymptotic number of computational steps for any OFDM-IM 
mapper implementation under the optimal SE configuration must be $\Theta(N)$.
\end{proof}


\section{Throughput Analysis}\label{sec:throughput}
Our theoretical findings summarized in Theorem~\ref{th:imorder},
disclose the conditions for the computational feasibility of the
optimal OFDM-IM mapper. The theorem requires exactly $\Theta(N)$ steps
for the mapper. Since the $M$-ary LUT block of the 
OFDM-IM mapper (Fig.~\ref{fig:optimalmapper}) already runs in $N/2=O(N)$ 
computational steps, to meet the theorem we just need to demonstrate
the IxS block can be implemented with $\Theta(N)$ computational steps.

By relying on the literature in combinatorics, one can achieve (un)ranking
complexities faster than the $\Theta(N)$ time required by our 
Theorem~\ref{th:imorder} e.g.,~\cite{parque-2018, shimizu-unrakingsmalk-2014}. 
Such a performance, however, demands $k=o(N)$. Translated to the OFDM-IM domain, 
this means such algorithms prevent the SE maximization 
(Corollary~\ref{col:lowerboundtradeoff}). We identify that the original OFDM-IM 
mapper (and its variants) refer to the (un)ranking algorithm named 
``Combinadic''~\cite{buckles-comb-1977,combinadic-2014}\footnote{In \cite{crouse-2007},
the author points a fix to the algorithm of~\cite{buckles-comb-1977}.}.
In Subsection~\ref{subsec:combinadic}, we analyze the OFDM-IM SCE
having Combinadic as the IxS block. We show that the Combinadic algorithm 
not only prevents the mapper to meet our Theorem~\ref{th:imorder} but also 
surpasses the $O(N\log_2 N)$ complexity of the IDFT DSP algorithm. 
In Subsection~\ref{subsec:optimal}, we propose an optimal OFDM-IM mapper by
adapting Combinadic to run in linear rather than  quadratic complexity.

\subsection{OFDM-IM Mapper with Combinadic}~\label{subsec:combinadic}
We start this subsection by explaining how the Combinadic algorithm
works. Then, we analyze its CC when the optimal SE configuration of OFDM-IM is allowed.
Based on that, we conduct the spectro-computational analysis of the OFDM-IM mapper.

\subsubsection{Combinadic Terminology}
The Combinadic algorithm relies on the fact that each decimal number $X$ in the integer range 
$[0,{N\choose k}-1]$ has an unique representation $(c_k,\cdots,c_2,c_1)$ in the combinatorial number 
system\cite{knuth-art-combinatorial-2011}~(Eq.~\ref{eq:base}). 
For OFDM-IM, $X$ represents the $P_1$-bit input (in base-10) and the coefficients 
$c_k>\cdots>c_{2}>c_1\geq 0$ represent the indexes of the $k$ subcarriers that must be 
active in the subblock.
\begin{eqnarray}\label{eq:base}
X &=& {c_{k} \choose k} + \cdots + {c_{2} \choose 2} + {c_{1} \choose 1}
\end{eqnarray}

Combinadic may refer to two distinct tasks, namely, unranking and ranking.
The Combinadic unranking (shown in Alg.~\ref{alg:unrankingcombinadic})
consists in computing the array of coefficients $c_i$, $i\in[1,k]$, of Eq.~(\ref{eq:base})
from the input $X$ (along with $N$ and $k$). The Combinadic unranking
takes place in the IxS of the OFDM-IM transmitter. The reverse process, i.e.,
computing $X$ given all $k$ coefficients $c_i$, $i\in[1,k]$, is known as ranking
and is performed by the IxS of the OFDM-IM receiver (Alg.~\ref{alg:rankingcombinadic}).

\subsubsection{Combinadic Unranking Functioning}
The Combinadic unranking is shown in Alg.~\ref{alg:unrankingcombinadic}. 
It takes $N$, $k$ and $X$ as input parameters and outputs the array $c_i$,
$i\in[1,k]$ such that $X =\sum_{i=1}^{k}{c_i\choose i}$ (Eq.~\ref{eq:base}).
The \emph{candidate} values for the coefficients $c_i$ considered by the 
algorithm are $0,1,\cdots,N-1$, which represent the indexes of the $N$ subcarriers. 
The coefficients are determined from $c_k$ until $c_1$ and  the variable $cc$ 
(line~\ref{ln:unrankingcc}) stores the next candidate value for the 
current coefficient being computed. The first coefficient to be computed is $c_k$
and its first candidate is $N-1$. This is the value of $cc$ in the very first execution
of line~\ref{ln:unrankingn1}. 
For every candidate value $cc$, the  corresponding binomial coefficient  ${cc\choose i}$
is computed and stored in the variable $ccBinCoef$ (line~\ref{ln:unrankingbc}).
If condition $ccBinCoef\leq X$ is satisfied (line~\ref{ln:unrankinginnercond}), then 
the candidate value $cc$ is confirmed as the value of $c_i$ (line~\ref{ln:unrankingcfound})
and $X$ is updated accordingly (line~\ref{ln:unrankingupdatex}). This entire process 
repeats until all the remainder $k-1$ coefficients are determined.

\begin{procedure*}[ttt!]
\begin{minipage}{0.46\textwidth}
    \begin{algorithm}[H]
    \begin{algorithmic}[1]
    \STATE \COMMENT{Inputs: $X$, $N$, and $k\in [1,N]$}
    \STATE \COMMENT{Output: Array $c_i$ ($i\in[1,k]$) such that $X =\sum_{i=1}^{k}{c_i\choose i}$ (Eq.~\ref{eq:base})}
        \STATE $cc \leftarrow N$;\label{ln:unrankingcc}\COMMENT{the current next candidate for $c_i$};
        \FOR{$i$ \textbf{from} $k$ \textbf{downto} $1$}
        \REPEAT 
            \STATE{$cc \leftarrow cc - 1;$}\label{ln:unrankingn1} \COMMENT{the first candidate for $c_k$ is $N-1$};
            \STATE{$ccBinCoef \leftarrow {cc \choose i};$} \label{ln:unrankingbc}
        \UNTIL{$ccBinCoef\leq X$} \label{ln:unrankinginnercond}
         \STATE $c_i \leftarrow cc$; \label{ln:unrankingcfound}
         \STATE $X \leftarrow X - ccBinCoef$;\label{ln:unrankingupdatex}
        \ENDFOR
        \RETURN  array $c$;
    \end{algorithmic}
      \caption{Combinadic Unranking (OFDM-IM IxS Transmitter).}\label{alg:unrankingcombinadic}
    \end{algorithm}
\end{minipage}
\hfill
\begin{minipage}{0.46\textwidth}
    \begin{algorithm}[H]
    \begin{algorithmic}[1]
   \STATE \COMMENT{Inputs: Array $c_k>\cdots>c_2>c_1\geq 0$, $N>c_k$, and $k\in [1,N]$}
    \STATE \COMMENT{Output: {$X =\sum_{i=1}^{k}{c_i\choose i}$} (Eq.~\ref{eq:base})};
    \STATE $X \leftarrow 0$;
    \FOR{$i$ \textbf{from} $1$ \TO $k$}\label{ln:ranking1loop}
     \STATE $X \leftarrow X + {c_i\choose i}$;
    \ENDFOR
    \RETURN $X$;
    \end{algorithmic}
      \caption{Combinadic Ranking (OFDM-IM IxS Receiver).
              \label{alg:rankingcombinadic}}
    \end{algorithm}
\end{minipage}
      \caption*{Combinadic unranking and ranking algorithms referred to by the IxS block of original OFDM-IM mapper. In the maximal spectral efficiency OFDM-IM mapper (Fig.~\ref{fig:optimalmapper}), these algorithms run in $O(N^2)$, surpassing the computational complexity of the Fourier transform algorithm. \label{fig:combinadic}}
\end{procedure*}

\subsubsection{Combinadic Unranking Complexity}
In a particular \emph{worst-case} instance of Combinadic unranking (Alg.~\ref{alg:unrankingcombinadic}), 
the logic test of the inner loop
(line~\ref{ln:unrankinginnercond}) fails for $cc=N-1, N-2, \cdots, k$
in the first iteration of the outer loop, i.e. when the first coefficient 
$c_k$ is being determined. 
Thus, $c_k$ is assigned to $k-1$. This narrows the list of candidates
(for the remainder $k-1$ coefficients) to the values $k-2,k-3,\cdots,1,0$ .
Since the combinatorial number system ensures that all $k$ coefficients are distinct and  
that $c_k$ is the largest one, a candidate value that fails for $c_k$ can be discarded
for $c_{k-1}$ and so on. Thus, after $c_k$ is determined, there must be at least $k-1$ 
candidate values for the remainder $k-1$ coefficients. Because of this, there is only
one logic test per candidate value in the inner loop regardless of the number of coefficients. 
Since there are $N$ candidate values, the inner loop takes $O(N)$ time regardless of the outer loop. 
In each test of the inner loop, Combinadic relies on the multiplicative identity 
(Eq.~\ref{eq:multiplicativeformula}) to compute the binomial coefficient value in $O(k)$ time.
\begin{eqnarray}
{N \choose k} &=& \prod_{i=1}^{k}\frac{N-i+1}{i} \label{eq:multiplicativeformula}
\end{eqnarray}
Therefore, the overall CC of the Combinadic unranking algorithm
is $O(Nk)$. Considering the optimal SE configuration, $k=N/2$ and the 
complexity becomes $O(N^2)$, which is asymptotically higher than the 
$O(N\log N)$ complexity of the IDFT block.

\subsubsection{Combinadic Ranking Functioning and Complexity}
The Combinadic ranking is shown in Alg.~\ref{alg:rankingcombinadic}. 
It takes the array of coefficients $c_i$, $i\in[1,k]$ from the 
OFDM-IM detector and performs a summation of the $k$ binomial coefficients
${c_{1} \choose 1} + {c_{2} \choose 2} +\cdots + {c_{k} \choose k}$
(Eq.~\ref{eq:base}). Since each binomial coefficient ${c_i\choose i}$
can be calcuated in $O(i)$ time by the  multiplicative formula 
(Eq.~\ref{eq:multiplicativeformula}), and $i$ ranges from $1$ to $k$,
the total number of multiplications performed by the algorithm is 
$1+2+\cdots+k=k(k+1)/2=O(k^2)$. Considering the optimal OFDM-IM setup,
$k=N/2$, the overall complexity becomes $O(N^2)$ as with Combinadic 
unranking.

\subsubsection{OFDM-IM Mapper Throughput with Combinadic}\label{subsubsec:ofdmimsc}
We now analyze the SC throughput of the OFDM-IM mapper assuming the IxS block is
implemented by the Combinadic algorithm~\cite{combinadic-2014,buckles-comb-1977} 
as in the original OFDM-IM design~\cite{basar-ofdmim-transac-2013}. 
Considering the optimal OFDM-IM setup, the total
number of bits per symbol is $N/2+\lfloor\log_2{N\choose N/2}\rfloor$, whereas
the IxS complexity is $O(N^2)$, as previously analyzed. Thus, according to 
Def.~\ref{def:sca}, the resulting SC throughput must satisfy Ineq.~(\ref{eq:sc1a}) 
as follows, otherwise it nullifies over $N$. 

\begin{eqnarray}
\lim_{N\to\infty} \frac{N/2 +\lfloor\log_2{N\choose N/2}\rfloor}{O(N^2)} &\stackrel{?}{>}& 0 \label{eq:sc1a}
\end{eqnarray}

According to the theory of computational complexity, 
the wall-clock time taken by a particular implementation of a $O(N^2)$ algorithm
is bounded by the function $\kappa N^2$, in which the constant $\kappa>0$ 
captures the  wall-clock runtime taken by the asymptotic dominant 
instruction of the algorithm\footnote{The instruction we choose to count in the 
analysis. Mostly, real or complex arithmetic instructions for DSP algorithms.} 
on a real machine. In turn, the number of index modulated bits tends to
$N-\log_2\sqrt{N}$ as $N$ grows (Lemma~\ref{lemma:p_1}). With basic calculus, one can 
verify that the limit in Ineq.~(\ref{eq:sc1a}) tends to zero for arbitrarily large $N$ 
regardless of the value of $\kappa$, as follows.
\begin{eqnarray}
\lim_{N\to\infty} \frac{N/2 + N-\log_2\sqrt N}{\kappa\cdot N^2}  &=& 0 \label{eq:sc1b}
\end{eqnarray}

Therefore, referring to the original Combinadic algorithm to implement the IxS block
in the optimal SE configuration causes the SC throughput of the OFDM-IM mapper to nullify as $N$ grows.

\subsection{Optimal Spectro-Computational Mapper}~\label{subsec:optimal}
To avoid the asymptotic nullification of the OFDM-IM mapper throughput while assuring
the maximal SE, the IxS (un)ranking algorithm must run nor faster nor slower than
$\Theta(N)$ (Thm.~\ref{th:imorder}). In~\cite{kokosinski-1995}, the author presents four 
unranking algorithms, out of which one (called ``unranking-comb-D'') can meet 
that requirement. Therefore, one can consider that algorithm to validate our
theoretical findings. However, we remark that the Combinadic algorithm
(referred to by the original OFDM-IM design) can benefit from the same properties 
of unranking-comb-D to run in $\Theta(N)$ rather than $O(N^2)$ under the optimal OFDM-IM
setup. Similarly, the ranking algorithm (not proposed in~\cite{kokosinski-1995}) can 
also run in $O(N)$ as well. Next, we explain how to adapt Combinadic to enable
the minimum possible CC when the maximal SE is allowed.

\subsubsection{Linear-time Combinadic Unranking}
The main bottleneck in the time complexity of Combinadic (un)ranking (Alg.~\ref{alg:unrankingcombinadic})
is the inner loop. As previously explained, the inner loop takes $k$ iterations, each 
of which demands further $O(i)$ iterations to compute the binomial coefficients ${c_i\choose i}$.
Since $i$ ranges from $k$ to $1$ and the optimal OFDM-IM setup imposes $k=O(N)$,
this yields $k\cdot O(i)= N/2\times O(N/2)=O(N^2)$. 
To improve this complexity, note that only the first candidate binomial 
coefficient ${c_k\choose k}={N-1 \choose N/2}$ 
needs to be computed from scratch (in $O(k)$ time). 
Thus, such computation can be performed outside both loops 
of Combinadic (Alg.~\ref{alg:unrankingcombinadic}) and stored in a variable we refer to as
$ccBinCoef$. The resulting 
modification is shown in line~\ref{ln:unrank2firstbincoef} of the Linear-time Combinadic
unranking (Alg.~\ref{alg:unranking2}).
In this algorithm, the variables $cc$ and $ccBinCoef$ denote the candidate
values for $c_i$ and ${c_i\choose i}$, respectively.
Following $ccBinCoef={c_k\choose k}$, the next candidate binomial coefficient, 
either ${N-1 \choose N/2-1}$ or ${N-2 \choose N/2-1}$, 
can be computed from $ccBinCoef$ itself in $O(1)$ time. In general,
one can calculate ${c_i-1\choose i}$ and ${c_i-1\choose i-1}$ from
${c_i\choose i}$ by relying on the following respective equations~\cite{kokosinski-1995}:
\begin{eqnarray}
{c_i-1\choose i}&=&((c_i-i)*{c_i\choose i})/c_i \label{eq:binprop1}\\
{c_i-1\choose i-1}&=&(i*{c_i\choose i})/c_i \label{eq:binprop2}
\end{eqnarray}

 The Eqs. (\ref{eq:binprop1}) and (\ref{eq:binprop2}) are exploited by lines~\ref{ln:unranking2property1} 
and~\ref{ln:unranking2property2} of Alg.~\ref{alg:unranking2}, respectively.
Thus, all remainder binomial coefficients within the logic test of the inner loop are 
computed in $O(1)$ time. Therefore, the complexity of Combinadic unranking improves 
from $k\cdot O(i)= N/2\times O(N/2)=O(N^2)$ to $O(k) + k\cdot O(1)$, yielding 
$N/2 + N/2\times O(1)=O(N)$ in the optimal OFDM-IM configuration.
\begin{procedure*}[t!]
\begin{minipage}{0.46\textwidth}
    \begin{algorithm}[H]
    \begin{algorithmic}[1]
   \STATE \COMMENT{Inputs: $X$, $N$, and $k\in [1,N]$}
    \STATE \COMMENT{Output: Array $c_i$ ($i\in[1,k]$) such that $X =\sum_{i=1}^{k}{c_i\choose i}$ (Eq.~\ref{eq:base})}
    \STATE $cc \leftarrow N-1$; \COMMENT{largest candidate for $c_i$};
    \STATE $ccBinCoef \leftarrow {cc \choose k}$; \COMMENT{candidate value for ${c_k\choose k}$};\label{ln:unrank2firstbincoef}
    \FOR{$i$ \textbf{from} $k$ \textbf{downto} $1$}
     \STATE $c_i \leftarrow cc$; 
    \WHILE{$ccBinCoef>X$}
  \STATE \COMMENT{Below, ${c_{i}-1 \choose i}$ is computed from ${c_{i}\choose i}$ in $O(1)$};
     \STATE $ccBinCoef$$\leftarrow$$((c_i$$-$$i)$*$ccBinCoef)/c_i$;~\label{ln:unranking2property1}
     \STATE $c_i \leftarrow c_i-1$;
    \ENDWHILE 
     \STATE $X \leftarrow X - ccBinCoef$; 
  \STATE \COMMENT{Below, ${c_{i}-1 \choose i-1}$ is computed from ${c_{i}\choose i}$ in $O(1)$};
     \STATE $cc \leftarrow c_i-1$;
      \IF{$cc = 0$} \STATE{\textbf{return} array c}
      \ENDIF
     \STATE $ccBinCoef \leftarrow (i*ccBinCoef)/c_i$;~\label{ln:unranking2property2}
    \ENDFOR
    \STATE{\textbf{return} array c}
    \end{algorithmic}
      \caption{Linear-time Combinadic Unranking (OFDM-IM Index Selector Transmitter). \label{alg:unranking2}}
    \end{algorithm}
\end{minipage}
\hfill
\begin{minipage}{0.46\textwidth}
    \begin{algorithm}[H]
    \begin{algorithmic}[1]
   \STATE \COMMENT{Inputs: Array $c_k>\cdots>c_2>c_1\geq 0$, $N>c_k$, and $k\in [1,N]$}
    \STATE \COMMENT{Output: {$X =\sum_{i=1}^{k}{c_i\choose i}$} (Eq.~\ref{eq:base})};
    \STATE $i \leftarrow 1$; 
    \WHILE{$i\leq k$ \AND$c_i<i$} \label{ln:ranking2specialcaseinit} 
       \STATE $ i \leftarrow i + 1$;      
    \ENDWHILE
      \IF {$i>k$} {\RETURN 0;}  
    \ENDIF   \label{ln:ranking2specialcaseend}   
    \STATE $ccBinCoef \leftarrow {c_i \choose i}$~\label{ln:ranking2initializebc}; $X \leftarrow 0$; 
    \FOR{$cc$ \textbf{from} $c_i$ \TO $N-1$} \label{ln:rank2for} 
      \IF{$c_i = cc$} \label{ln:rank2check}
        \STATE 
        {
           $X \leftarrow X + ccBinCoef$;
          \STATE $ccBinCoef$~$\leftarrow$~$(ccBinCoef$$*$$(c_i$$+$$1))$/$($$i$$+$$1)$;\label{ln:rank2improve1}
         \STATE $i \leftarrow i + 1$;
        }
      \ELSE 
     \STATE
       {
         $ccBinCoef$$\leftarrow$ $(ccBinCoef*(cc+1))$/$(cc+1-i)$;
       } 
      \ENDIF
   \ENDFOR
   \RETURN $X$;
    \end{algorithmic}
      \caption{Linear-time Combinadic Ranking (OFDM-IM Index Selector Receiver). \label{alg:ranking2}}
    \end{algorithm}
\end{minipage}
      \caption*{Adaptation of the Combinadic algorithms 
(unranking~\ref{alg:unrankingcombinadic} and ranking~\ref{alg:rankingcombinadic})
referred to by the original OFDM-IM mapper to run in $O(N)$ time. We 
prove these adaptations enable the overall OFDM-IM mapper to maximize 
the spectral efficiency gain over OFDM while consuming the same time 
and space computational complexities of the classic OFDM mapper.
    \label{alg:lixsunranking}}
\end{procedure*}
\subsubsection{Linear-time Combinadic Ranking}
As with the Combinadic unranking, one can also reduce the time complexity 
of the Combinadic ranking (Alg.~\ref{alg:rankingcombinadic}) from $O(N^2)$ 
to $O(N)$ by computing ${c_i+1\choose i}$ and ${c_i+1\choose i+1}$ 
from ${c_i\choose i}$ in $O(1)$ time rather than from scratch in $O(i)$
time with the multiplicative formula (Eq.~\ref{eq:multiplicativeformula}). 
However, these $O(1)$-time properties require the values in the array $c$
to be consecutive, which can not be the case of OFDM-IM because these values
depend on the data the user transmits. One can avoid calculating all $k$ binomial 
coefficients from scratch by relying on the fact that the values 
$c_k>\cdots>c_2>c_1$ are restricted  to the integer range $[0, N-1]$.
Based on this, the linear-time Combinadic ranking (Alg.~\ref{alg:ranking2})
computes from scratch only one binomial coefficient (we refer 
to as $ccBinCoef$, line~\ref{ln:ranking2initializebc}) from which at most $N-1$ other
coefficients can be computed sequentially in $O(1)$ time each.  Since the
value of all other coefficients is computed from $ccBinCoef$, this variable
cannot be initialized with null binomial coefficients i.e., ${c_i\choose i}$
such that $c_i<i$. Thus, from lines~\ref{ln:ranking2specialcaseinit} to
\ref{ln:ranking2specialcaseend},  Alg.~\ref{alg:ranking2} looks for the largest
$i$ in the range $[0,\cdots, i,\cdots, N-1]$ such that $c_i\geq i$. 
These lines take $O(k)$ iterations. In line~\ref{ln:ranking2initializebc},
$ccBinCoef$ is initialized as ${c_i\choose i}$ in $O(i)$ time, yielding a
cumulative complexity of $O(k)+O(k)=O(k)$. From this, any consecutive
binomial coefficient (either ${c_i+1\choose i}$ or ${c_i+1\choose i+1}$)
can be computed in $O(1)$ time from $ccBinCoef={c_i\choose i}$ as in
the linear-time unranking algorithm.  Since the total number of remainder
binomial coefficients ranges from $i$ to $N-1$, the loop in line~\ref{ln:rank2for}
computes all of them in $O(N-i)=O(N)$ time. Therefore, the overall complexity is
$O(k)+O(k)+O(N)$ which becomes $O(N)$ under the optimal OFDM-IM setup (i.e., $k=N/2$).

\subsubsection{Scalable OFDM-IM Mapper Throughput}\label{subsubsec:ofdmimoptimalsc} 
We now proceed with the SC analysis of the optimal OFDM-IM mapper (Fig.~\ref{fig:optimalmapper})
considering an IxS implementation that meets our Theorem~\ref{th:imorder}. 
The analysis is as in subsection~\ref{subsubsec:ofdmimsc}, except for the fact that the 
IxS algorithm runs in $\Theta(N)$ time complexity.  Thus, the SC throughput is given by
\begin{eqnarray}
\lim_{N\to\infty} \frac{N/2 + N-\log_2\sqrt N}{\kappa\cdot N} \label{eq:sc2b} 
\end{eqnarray}

As $N$ grows, the time complexity is 
bounded by $\kappa N$ for some constant $\kappa>0$. Similarly, the SC 
throughput of the mapper results in a non-null constant $\kappa>0$, 
meeting the Def.~\ref{def:sca}. As explained in the  subsection~\ref{subsubsec:ofdmimsc},
$\kappa>0$ is constant that depends on the computational apparatus running the algorithm.
Under the linear-time IxS complexity, the throughput of the OFDM-IM mapper 
does not nullify for arbitrarily large $N$, 
\begin{eqnarray}
\lim_{N\to\infty} \frac{N/2 + N-\log_2\sqrt N}{\kappa\cdot N} &=& \frac{3}{2\kappa} >0 \label{eq:sc2bb} 
\end{eqnarray}

Note also that the throughput can increase with $N$ if one achieves a 
$o(N)$ mapper. However, as demonstrated in Corollary~\ref{col:lowerboundtradeoff},
this conflicts with the optimal SE setup, thereby preventing the SE maximization.

\section{Implementation and Evaluation}\label{sec:practical}
In this section, we present a practical case study to validate our 
theoretical findings. In subsection~\ref{sub:library}, we
introduce the open-source library we develop for the case study.
In subsection~\ref{subsec:methodology}, we describe the methodology
to assess and reproduce the empirical values of our experiments.
Finally, in subsection~\ref{subsec:results}, we present the results
of our practical case study that validate our theoretical findings.

\subsection{Open-source OFDM-IM Mapper Library}\label{sub:library}
We wrote a C++ library that implements all OFDM-IM steps to map/demap
an $N$-subcarrier complex frequency-domain symbol. We implement the IxS block 
with C++ \emph{callbacks} to enable flexible addition of novel (un)ranking 
algorithms. In the released version,  we implement the original IxS 
algorithm~\cite{basar-ofdmim-transac-2013} and all the algorithms presented
in this work (Algs.~\ref{alg:unranking2} and~\ref{alg:ranking2}). We do not 
implement (un)ranking algorithms that can reach a complexity that is asymptotically 
faster than required by our Theorem~\ref{th:imorder} 
e.g.\cite{parque-2018, shimizu-unrakingsmalk-2014}. As previously 
explained~(Corollary~\ref{col:lowerboundtradeoff}),
performing (un)ranking faster than $\Theta(N)$ would require $k\not=N/2$,
thereby preventing the SE maximization (Corollary~\ref{col:lowerboundtradeoff}).
However, future works may implement IxS algorithms that improve
the original OFDM-IM using other criteria (e.g. BER~\cite{yoon-hammingmapper-2019}, 
\cite{wen-equiprobable-2016}.) than CC and SE. These and other IxS algorithms can 
also be included/evaluated in our library.
The entire source code of our library, as well as detailed instructions on
how to enhance it with novel IxS algorithms, are publicly available under the 
GPLv2 license in~\cite{queiroz-gitofdmim-2020}.
\begin{figure*}[t]
    \begin{subfigure}{.5\textwidth}
      \includegraphics[width=3in]{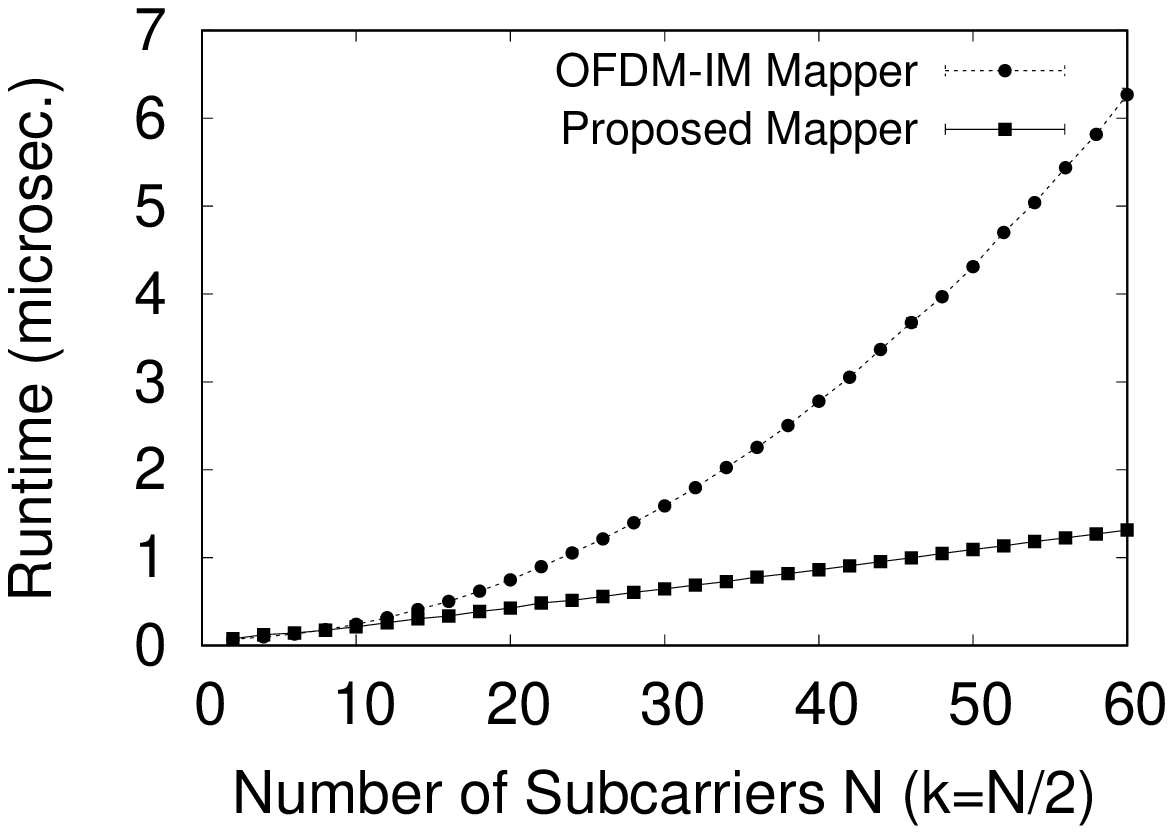}
      \caption{Runtime.\label{fig:runtimemapper}}
    \end{subfigure}
    \begin{subfigure}{.5\textwidth}
      \includegraphics[width=3in]{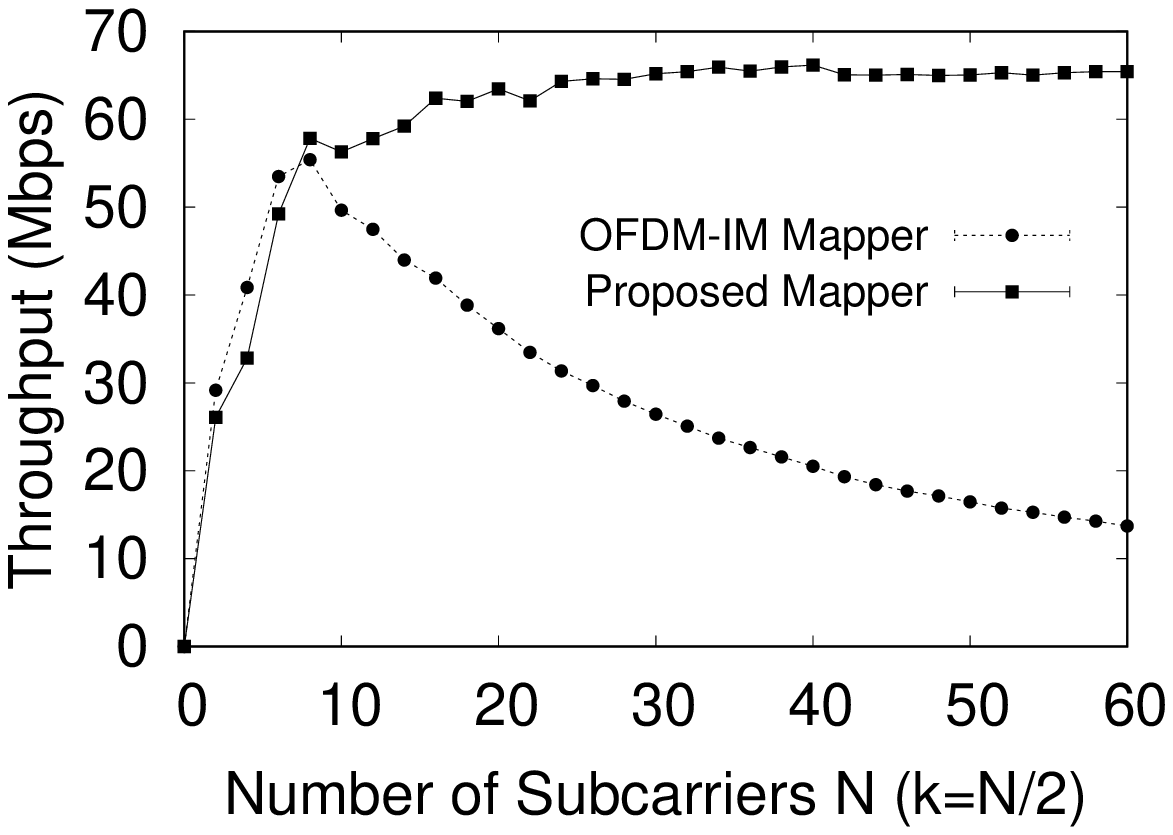}
      \caption{Throughput.\label{fig:throughputmmaper}}
    \end{subfigure}
      \caption{Mapper performance: Proposed vs. OFDM-IM. \label{fig:mapperperformance}}
\end{figure*}

\subsection{Performance Assessment Methodology}\label{subsec:methodology}
We assess the runtime $T(N)$ (in secs.) and the throughput~$m(N)/T(N)$ 
(in megabits per seconds, Def.~\ref{def:sct}) for both the
original OFDM-IM mapper and our proposed mapper 
under the optimal SE configuration (i.e., $g=1$, $k=N/2$ and $M=2$).
For each mapper, we assess the performance indicators at both the 
transmitter (mapper) and the receiver (demapper) on a 3.5-GHz Intel 
i7-3770K processor.

We sampled the wall-clock runtime $T(N)$ of each mapper with the
standard C++ \texttt{timespace} library~\cite{timespec-2018} under the profile 
\texttt{CLOCK\_MONOTONIC}.
In each execution, we assigned our process with the largest real-time priority 
and employed the \texttt{isolcpus} Linux kernel directive to allocate one 
physical CPU core exclusively for each process. 
We generate the input for the mappers with the standard C++ 64-bit version of the
Mersenne Twistter (MT) 19937 pseudo-random number generator~\cite{matsumoto-mt-1998}. 
We set up three independent instances 
of MT19937\_64 with seeds $1973272912$, $1822174485$ and $1998078925$~\cite{prngs-2002}.
Every iteration, three sampled $T(N)$ are forwarded to the Akaroa-2 tool~\cite{akaroa2-2010} 
for statistical treatment. 
Akaroa-2 determines the minimum number of samples required to reach the 
steady-state mean estimation of a given precision. 
In our experiments, this precision corresponds to a relative error below $5\%$ and a
confidence interval of $95\%$. Besides, in all experiments the highest observed 
variance was below $10^{-3}$ and the average number of samples in the transient 
state was about 300.

Table~\ref{tb:mapper} reports all assessed results for both the original
 OFDM-IM mapper and the proposed mapper at the transmitter (mapper).
The table~\ref{tb:demapper} reports the analogous results  assessed at the 
receiver (demapper). From left to right, the tables present the following
columns: the number $N$ of symbol's subcarriers, the number $m(N)$ of bits 
per symbol, the SE gain of the original OFDM-IM waveform against the classic 
OFDM mapper\footnote{The maximum SE gain is $m(N)/N$~\cite{fan-ofdmgim3-2015}.}, 
the assessed (de)mapper, the assessed runtime $T(N)$,  
the half-width of the confidence interval $\delta$ 
for $T(N)$,  the achieved (de)mapping throughput, and the number $x$ 
of samples needed to achieve the required precision. The source-code 
of all our experiments is publicly available under 
GPLv2 license in~\cite{queiroz-gitofdmim-2020}.

\begin{figure*}[t]
    \begin{subfigure}{.5\textwidth}
      \includegraphics[width=3in]{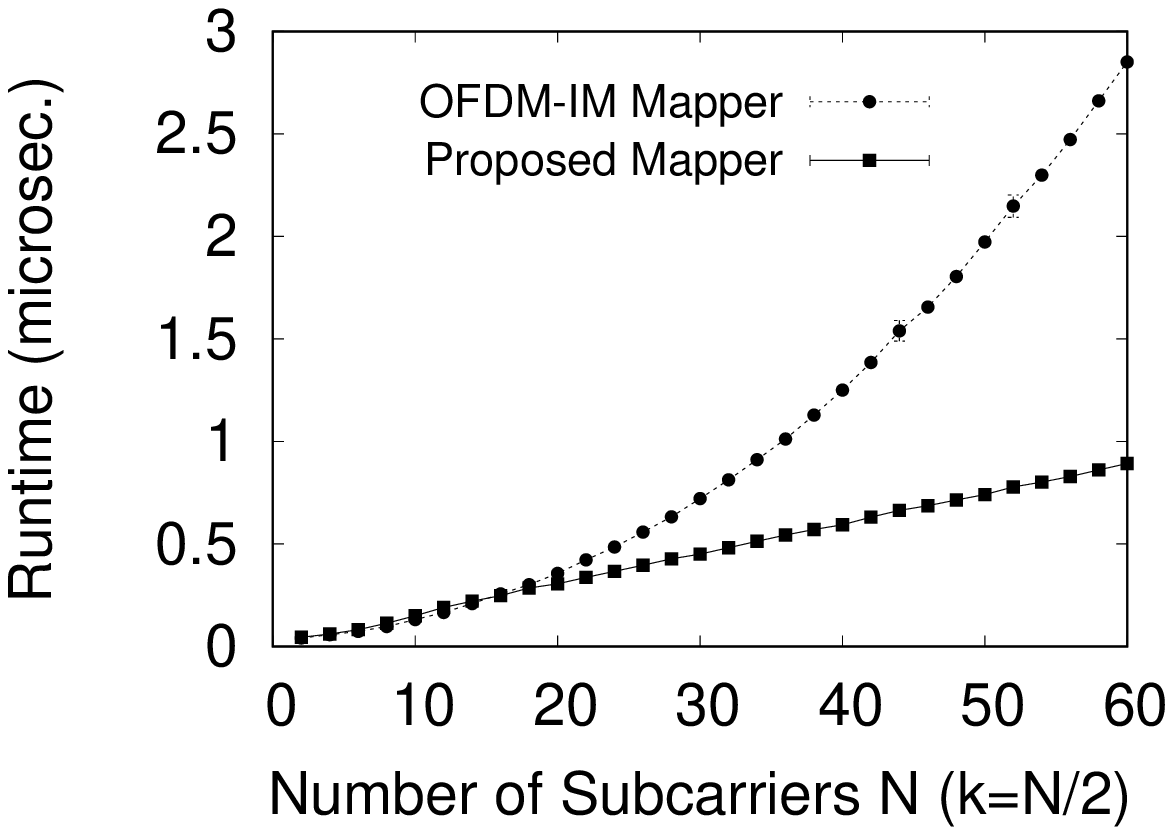}
      \caption{Runtime.\label{fig:runtimedemapper}}
    \end{subfigure}
    \begin{subfigure}{.5\textwidth}
      \includegraphics[width=3in]{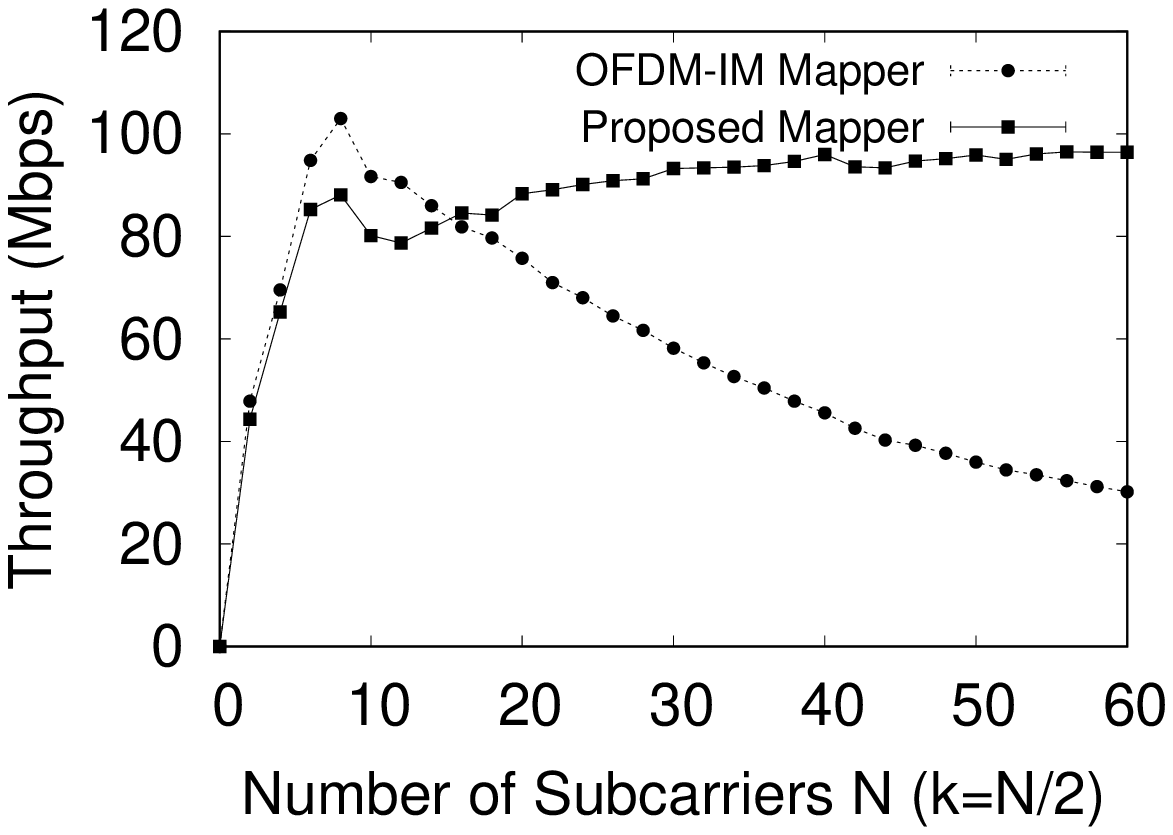}
      \caption{Throughput.\label{fig:throughputdemmaper}}
    \end{subfigure}
      \caption{Demapper performance: Proposed vs. OFDM-IM.\label{fig:demapperperformance}}
\end{figure*}

\subsection{Results}~\label{subsec:results}
In Fig.~\ref{fig:runtimemapper} and Fig.~\ref{fig:throughputmmaper}, we respectively 
plot the runtime and the throughput performances of the compared mappers for 
$N=2,4,\dots,62$. Although only particular values of $N$ verify in industry
standards (e.g. $N=48$~\cite{ieee80211-12}, $N=52$~\cite{ieee80211ac-2013}), 
we range it from small to large values to illustrate the asymptotic shape
predicted by our throughput analysis.
Detailed information about these plots are reported on the Table~\ref{tb:mapper}.
As predicted by our theoretical analysis (Subsections~\ref{subsubsec:ofdmimsc} 
and~\ref{subsec:optimal}), in the ideal setup, the runtime order of growth of 
the original OFDM-IM mapper is asymptotically larger than our proposed 
mapper (Fig.~\ref{fig:runtimemapper}). From the theoretical analysis, we know 
these complexities are $O(N^2)$ and $O(N)$, respectively. Naturally, the
runtime curves of both mappers increase monotonically towards infinite
as the number $N$ of subcarriers grows. However,
because the runtime order of growth of the original OFDM-IM mapper is larger
than the number $m(N)=N/2+\log_2{N\choose N/2}=O(N)$ of bits per symbol,
the throughput $m(N)/T(N)$ of this mapper nullifies as $N$ grows 
(Fig.~\ref{fig:throughputmmaper}). This validates the theoretical analysis
we show in subsection~\ref{subsubsec:ofdmimsc}.

By contrast, when our proposed mapper takes place, both the resulting
computational complexity $T(N)$ and the total of bits $m(N)$ per symbol 
increases in the same order of growth. Thus, the throughput
$m(N)/T(N)$ tends to a non-null constant. In particular, according to our 
theoretical analysis in subsection~\ref{subsubsec:ofdmimoptimalsc},
this is  $m(N)/T(N)=3/(2\kappa)$. Recall that the constant $\kappa>0$ captures the
 wall-clock runtime taken by the asymptotic dominant instruction of the algorithm 
on a real machine. However, in our practical case study, the assessed runtime 
$T(N)$ encompasses all computational instructions performed by each (de)mapper. 
Thus, $\kappa$ represents an average of the runtime taken by each kind of
instruction on the machine of our testbed i.e., the Intel i7-3770K 
processor. From the assessed throughput $m(N)/T(N)$, the average value of 
$\kappa$ can be computed based on Eq.~(\ref{eq:sc2bb}), which is 
$\kappa=3/2\cdot 1/(m(N)/T(N))$. In our testbed, the average runtime 
per computational instruction was $0.02$~$\mu s$.

In Fig.~\ref{fig:runtimedemapper} and Fig.~\ref{fig:throughputdemmaper}, we respectively 
plot the runtime and the throughput performances of the compared demappers for different 
values of $N$. Detailed informations of these plots are reported on the 
Table~\ref{tb:demapper}. As in the mapper analysis, the throughput of the 
original OFDM-IM demapper tends to zero as $N$ grows whereas the throughput of
our proposed demapper tends to a non-null constant under the same conditions.
If compared against its corresponding mapper, we verify that our proposed demapper 
presents larger throughput. This means that, although both our mapper and demapper have
the same $O(N)$ asymptotic complexity, the demapper implementation is less complex 
concerning the constant $\kappa$. Indeed, we verify an average $\kappa=0.015$~$\mu s$
for the demapper in contrast with the $0.02$~$\mu s$ for the mapper.

\section{Conclusion and Future Directions}\label{sec:conclusion}
In this work, we studied the trade-off between spectral efficiency (SE) 
and computational complexity (CC) $T(N)$ of an $N$-subcarrier OFDM with Index 
Modulation (OFDM-IM) mapper. We identified that the CC lower bound to map any 
of all $2^{\lfloor\log_2{N \choose N/2}\rfloor}$ OFDM-IM waveforms 
is $\Omega(N)$. With this, we formally proved that enabling all OFDM-IM 
waveforms is not computationally intractable, as previously 
conjectured~\cite{lu-compressiveim-2018,basar-thesurvey-2017}. Besides,
we showed that any algorithm running faster than this lower bound
prevents the OFDM-IM SE maximization.
We also presented the spectro-computational efficiency (SCE) metric 
both to analyze the mapper's throughput and identify an upper bound
for the mapper's complexity $T(N)$ under the maximal SE. In this context,
we proved that the worst tolerable CC for the mapper is $O(N)$ otherwise,
the mapper's throughput nullifies as the system is assigned more and
more subcarriers. We showed that this is the case of the original OFDM-IM 
mapper~\cite{basar-ofdmim-transac-2013}, in which the $O(N^2)$
CC surpasses the $O(N\log_2N)$ CC of the IDFT/DFT algorithm. Then, we
presented an OFDM-IM mapper that enables the largest SE
under the minimum possible CC. 

We demonstrate our theoretical findings by implementing 
an open-source library that supports all DSP steps to map/demap an $N$-subcarrier 
complex frequency-domain OFDM-IM symbol. Our implementation supports different 
index selector algorithms and is the first to enable the SE maximization while 
preserving the same time and space asymptotic complexities of the classic OFDM mapper.
With our library, \textit{we showed that the OFDM-IM mapper does not need 
compromise approaches that prevail in the OFDM-IM literature such as 
subblock partitioning (SP)~\cite{mao-survey-ofdmim-2018, 
mao-dm_ofdm_im-ieeeaccess-2017,basar-thesurvey-2017, fan-ofdmgim3-2015, 
basar-ofdmim-transac-2013}, adoption of few active subcarriers~\cite{salah-2019}
or extra space complexity~\cite{queiroz-wcl-19}}.

Future works may consider extra performance indicators in the analysis 
(in addition to CC and SE) such as bit-error rate~\cite{yoon-hammingmapper-2019, wen-equiprobable-2016}. 
Moreover, our mapper can be directly applied to other IM systems that rely on the
same index selector of the original OFDM-IM mapper such as spatial modulation 
systems~\cite{jsac-spatialim-2019} and dual mode OFDM-IM~\cite{mao-dm_ofdm_im-ieeeaccess-2017}.
Besides, our methodology can guide the activation of all waveforms in other variants
of OFDM-IM such as multiple mode OFDM-IM~\cite{wen-mmode_ofdm_im-tran_comm-17}.
\begin{table}
\caption{Mapper performance: Proposed (``Prop.'') vs. original OFDM-IM (``Orig.'') \label{tb:mapper}
}
\centering
{\scriptsize
\begin{tabular}{|c|c|c|c|c|c|c|c|}
\hline
$N$ 
& 
  \begin{tabular}[c] {@{}c@{}}
    $m(N)$  \\ bits
  \end{tabular} 
&
  \begin{tabular}[c] {@{}c@{}}
      IM  \\ Gain
  \end{tabular} 
&
  \begin{tabular}[c] {@{}c@{}}
      IM \\ Mapper
  \end{tabular} 
  & 
  \begin{tabular}[c]{@{}c@{}}
      \textbf{Runtime} \\$({\mu s})$
  \end{tabular} 
  & 
  \begin{tabular}[c]{@{}c@{}}
       $\pm\delta$ \\ $(\mu s)$
  \end{tabular} 
  &
  \begin{tabular}[c]{@{}c@{}}
       \textbf{Through}\\ \textbf{put (Mbps)}
  \end{tabular} 
  & 
  $x$ 
   \\ \hline
\multirow{2}{*}{2} & \multirow{2}{*}{2} & \multirow{2}{*}{1.00} &Prop. &\textbf{0.08} & 0.004 & \textbf{26.08}  & 3792 \\ \cline{4-8}
& & & Orig. & \textbf{0.07} & 0.002 & \textbf{29.15} & 2358 \\ \hline\hline
\multirow{2}{*}{4} & \multirow{2}{*}{4} & \multirow{2}{*}{1.00} &Prop. &\textbf{0.12} & 0.001 & \textbf{32.81}  & 1854 \\ \cline{4-8}
& & & Orig. & \textbf{0.10} & 0.002 & \textbf{40.86} & 1710 \\ \hline\hline
\multirow{2}{*}{6} & \multirow{2}{*}{7} & \multirow{2}{*}{1.17} &Prop. &\textbf{0.14} & 0.003 & \textbf{49.23}  & 1704 \\ \cline{4-8}
& & & Orig. & \textbf{0.13} & 0.003 & \textbf{53.48} & 1656 \\ \hline\hline
\multirow{2}{*}{8} & \multirow{2}{*}{10} & \multirow{2}{*}{1.25} &Prop. &\textbf{0.17} & 0.001 & \textbf{57.84}  & 1686 \\ \cline{4-8}
& & & Orig. & \textbf{0.18} & 0.001 & \textbf{55.40} & 1602 \\ \hline\hline
\multirow{2}{*}{10} & \multirow{2}{*}{12} & \multirow{2}{*}{1.20} &Prop. &\textbf{0.21} & 0.001 & \textbf{56.29}  & 1536 \\ \cline{4-8}
& & & Orig. & \textbf{0.24} & 0.002 & \textbf{49.65} & 2208 \\ \hline\hline
\multirow{2}{*}{12} & \multirow{2}{*}{15} & \multirow{2}{*}{1.25} &Prop. &\textbf{0.26} & 0.002 & \textbf{57.78}  & 1614 \\ \cline{4-8}
& & & Orig. & \textbf{0.32} & 0.003 & \textbf{47.47} & 1872 \\ \hline\hline
\multirow{2}{*}{14} & \multirow{2}{*}{18} & \multirow{2}{*}{1.28} &Prop. &\textbf{0.30} & 0.002 & \textbf{59.21}  & 1524 \\ \cline{4-8}
& & & Orig. & \textbf{0.41} & 0.003 & \textbf{43.99} & 1542 \\ \hline\hline
\multirow{2}{*}{16} & \multirow{2}{*}{21} & \multirow{2}{*}{1.31} &Prop. &\textbf{0.34} & 0.002 & \textbf{62.39}  & 1728 \\ \cline{4-8}
& & & Orig. & \textbf{0.50} & 0.004 & \textbf{41.92} & 1476 \\ \hline\hline
\multirow{2}{*}{18} & \multirow{2}{*}{24} & \multirow{2}{*}{1.33} &Prop. &\textbf{0.39} & 0.002 & \textbf{62.03}  & 1596 \\ \cline{4-8}
& & & Orig. & \textbf{0.62} & 0.005 & \textbf{38.85} & 1494 \\ \hline\hline
\multirow{2}{*}{20} & \multirow{2}{*}{27} & \multirow{2}{*}{1.35} &Prop. &\textbf{0.43} & 0.002 & \textbf{63.45}  & 1524 \\ \cline{4-8}
& & & Orig. & \textbf{0.75} & 0.007 & \textbf{36.18} & 1554 \\ \hline
\multirow{2}{*}{22} & \multirow{2}{*}{30} & \multirow{2}{*}{1.36} &Prop. &\textbf{0.48} & 0.002 & \textbf{62.10}  & 1884 \\ \cline{4-8}
& & & Orig. & \textbf{0.90} & 0.008 & \textbf{33.46} & 1518 \\ \hline\hline
\multirow{2}{*}{24} & \multirow{2}{*}{33} & \multirow{2}{*}{1.38} &Prop. &\textbf{0.51} & 0.002 & \textbf{64.31}  & 1554 \\ \cline{4-8}
& & & Orig. & \textbf{1.05} & 0.043 & \textbf{31.35} & 1512 \\ \hline\hline
\multirow{2}{*}{26} & \multirow{2}{*}{36} & \multirow{2}{*}{1.38} &Prop. &\textbf{0.56} & 0.001 & \textbf{64.61}  & 1560 \\ \cline{4-8}
& & & Orig. & \textbf{1.21} & 0.007 & \textbf{29.70} & 1470 \\ \hline\hline
\multirow{2}{*}{28} & \multirow{2}{*}{39} & \multirow{2}{*}{1.39} &Prop. &\textbf{0.60} & 0.002 & \textbf{64.55}  & 1536 \\ \cline{4-8}
& & & Orig. & \textbf{1.40} & 0.012 & \textbf{27.92} & 1512 \\ \hline\hline
\multirow{2}{*}{30} & \multirow{2}{*}{42} & \multirow{2}{*}{1.40} &Prop. &\textbf{0.64} & 0.003 & \textbf{65.19}  & 1518 \\ \cline{4-8}
& & & Orig. & \textbf{1.59} & 0.016 & \textbf{26.43} & 1476 \\ \hline\hline
\multirow{2}{*}{32} & \multirow{2}{*}{45} & \multirow{2}{*}{1.41} &Prop. &\textbf{0.69} & 0.010 & \textbf{65.43}  & 1524 \\ \cline{4-8}
& & & Orig. & \textbf{1.79} & 0.012 & \textbf{25.07} & 1548 \\ \hline\hline
\multirow{2}{*}{34} & \multirow{2}{*}{48} & \multirow{2}{*}{1.41} &Prop. &\textbf{0.73} & 0.003 & \textbf{65.93}  & 1560 \\ \cline{4-8}
& & & Orig. & \textbf{2.03} & 0.018 & \textbf{23.70} & 1518 \\ \hline\hline

\multirow{2}{*}{36} & \multirow{2}{*}{51} & \multirow{2}{*}{1.42} &Prop. &\textbf{0.78} & 0.008 & \textbf{65.47}  & 1500 \\ \cline{4-8}
& & & Orig. & \textbf{2.25} & 0.015 & \textbf{22.63} & 1482 \\ \hline\hline
\multirow{2}{*}{38} & \multirow{2}{*}{54} & \multirow{2}{*}{1.42} &Prop. &\textbf{0.82} & 0.002 & \textbf{65.96}  & 1608 \\ \cline{4-8}
& & & Orig. & \textbf{2.50} & 0.017 & \textbf{21.57} & 1776 \\ \hline\hline
\multirow{2}{*}{40} & \multirow{2}{*}{57} & \multirow{2}{*}{1.42} &Prop. &\textbf{0.86} & 0.002 & \textbf{66.16}  & 1524 \\ \cline{4-8}
& & & Orig. & \textbf{2.78} & 0.027 & \textbf{20.51} & 1530 \\ \hline\hline
\multirow{2}{*}{42} & \multirow{2}{*}{59} & \multirow{2}{*}{1.40} &Prop. &\textbf{0.91} & 0.003 & \textbf{65.06}  & 1620 \\ \cline{4-8}
& & & Orig. & \textbf{3.05} & 0.019 & \textbf{19.33} & 1458 \\ \hline\hline
\multirow{2}{*}{44} & \multirow{2}{*}{62} & \multirow{2}{*}{1.41} &Prop. &\textbf{0.95} & 0.003 & \textbf{65.02}  & 1686 \\ \cline{4-8}
& & & Orig. & \textbf{3.37} & 0.027 & \textbf{18.41} & 1518 \\ \hline\hline
\multirow{2}{*}{46} & \multirow{2}{*}{65} & \multirow{2}{*}{1.41} &Prop. &\textbf{1.00} & 0.002 & \textbf{65.10}  & 2118 \\ \cline{4-8}
& & & Orig. & \textbf{3.68} & 0.055 & \textbf{17.68} & 1548 \\ \hline\hline
\multirow{2}{*}{48} & \multirow{2}{*}{68} & \multirow{2}{*}{1.42} &Prop. &\textbf{1.05} & 0.002 & \textbf{64.98}  & 1536 \\ \cline{4-8}
& & & Orig. & \textbf{3.97} & 0.022 & \textbf{17.13} & 1476 \\ \hline\hline
\multirow{2}{*}{50} & \multirow{2}{*}{71} & \multirow{2}{*}{1.42} &Prop. &\textbf{1.09} & 0.010 & \textbf{65.04}  & 1530 \\ \cline{4-8}
& & & Orig. & \textbf{4.31} & 0.035 & \textbf{16.47} & 1494 \\ \hline\hline
\multirow{2}{*}{52} & \multirow{2}{*}{74} & \multirow{2}{*}{1.42} &Prop. &\textbf{1.13} & 0.002 & \textbf{65.31}  & 1578 \\ \cline{4-8}
& & & Orig. & \textbf{4.70} & 0.022 & \textbf{15.75} & 1494 \\ \hline\hline
\multirow{2}{*}{54} & \multirow{2}{*}{77} & \multirow{2}{*}{1.42} &Prop. &\textbf{1.18} & 0.002 & \textbf{65.03}  & 1470 \\ \cline{4-8}
& & & Orig. & \textbf{5.04} & 0.025 & \textbf{15.28} & 1500 \\ \hline\hline
\multirow{2}{*}{56} & \multirow{2}{*}{80} & \multirow{2}{*}{1.43} &Prop. &\textbf{1.23} & 0.002 & \textbf{65.31}  & 1440 \\ \cline{4-8}
& & & Orig. & \textbf{5.44} & 0.026 & \textbf{14.71} & 1536 \\ \hline\hline
\multirow{2}{*}{58} & \multirow{2}{*}{83} & \multirow{2}{*}{1.43} &Prop. &\textbf{1.27} & 0.004 & \textbf{65.42}  & 2064 \\ \cline{4-8}
& & & Orig. & \textbf{5.82} & 0.035 & \textbf{14.27} & 1512 \\ \hline\hline
\multirow{2}{*}{60} & \multirow{2}{*}{86} & \multirow{2}{*}{1.43} &Prop. &\textbf{1.31} & 0.003 & \textbf{65.42}  & 1614 \\ \cline{4-8}
& & & Orig. & \textbf{6.27} & 0.073 & \textbf{13.72} & 1476 \\ \hline\hline
\multirow{2}{*}{62} & \multirow{2}{*}{89} & \multirow{2}{*}{1.43} &Prop. &\textbf{1.36} & 0.003 & \textbf{65.45}  & 1548 \\ \cline{4-8}
& & & Orig. & \textbf{6.70} & 0.027 & \textbf{13.28} & 1500 \\ \hline
\arrayrulecolor{white} $\vdots$ & $\vdots$  & $\vdots$  & $\vdots$  & $\vdots$ & $\vdots$  & $\vdots$  &$\vdots$ \\ \hline \hline
\arrayrulecolor{black}\hline
\multirow{2}{*}{$\infty$} & \multirow{2}{*}{\scriptsize${\Theta(N)}$} & \multirow{2}{*}{1.5} &Prop. &{\scriptsize$\Theta(N)$} & 0 & $3/(2\kappa)$ & $\infty$ \\ \cline{4-8}
& & & Orig. & {\scriptsize$\Theta(N^2)$} & 0 & $0$ & $\infty$ \\ \hline
\end{tabular}
}
\end{table}
\begin{table}
\caption{Demapper performance: Proposed (``Prop.'') vs. original OFDM-IM (``Orig.'').\label{tb:demapper}
}
\centering
\label{tb:talpha}
{\scriptsize
\begin{tabular}{|c|c|c|c|c|c|c|c|}
\hline
$N$ 
& 
  \begin{tabular}[c] {@{}c@{}}
    $m(N)$  \\ bits
  \end{tabular} 
&
  \begin{tabular}[c] {@{}c@{}}
      IM  \\ Gain
  \end{tabular} 
&
  \begin{tabular}[c] {@{}c@{}}
      IM \\ Mapper
  \end{tabular} 
  & 
  \begin{tabular}[c]{@{}c@{}}
      \textbf{Runtime} \\$({\mu s})$
  \end{tabular} 
  & 
  \begin{tabular}[c]{@{}c@{}}
       $\pm\delta$ \\ $(\mu s)$
  \end{tabular} 
  &
  \begin{tabular}[c]{@{}c@{}}
       \textbf{Through}\\ \textbf{put (Mbps)}
  \end{tabular} 
  & 
  $x$ 
   \\ \hline
\multirow{2}{*}{2} & \multirow{2}{*}{2} & \multirow{2}{*}{1.00} &Prop. &\textbf{0.05} & 0.001 & \textbf{44.35}  & 1644 \\ \cline{4-8}
& & & Orig. & \textbf{0.04} & 0.001 & \textbf{47.85} & 1680 \\ \hline\hline
\multirow{2}{*}{4} & \multirow{2}{*}{4} & \multirow{2}{*}{1.00} &Prop. &\textbf{0.06} & 0.001 & \textbf{65.25}  & 1674 \\ \cline{4-8}
& & & Orig. & \textbf{0.06} & 0.000 & \textbf{69.57} & 1620 \\ \hline\hline
\multirow{2}{*}{6} & \multirow{2}{*}{7} & \multirow{2}{*}{1.17} &Prop. &\textbf{0.08} & 0.001 & \textbf{85.26}  & 1758 \\ \cline{4-8}
& & & Orig. & \textbf{0.07} & 0.001 & \textbf{94.85} & 1746 \\ \hline\hline
\multirow{2}{*}{8} & \multirow{2}{*}{10} & \multirow{2}{*}{1.25} &Prop. &\textbf{0.11} & 0.001 & \textbf{88.11}  & 2208 \\ \cline{4-8}
& & & Orig. & \textbf{0.10} & 0.001 & \textbf{102.99} & 1626 \\ \hline\hline
\multirow{2}{*}{10} & \multirow{2}{*}{12} & \multirow{2}{*}{1.20} &Prop. &\textbf{0.15} & 0.002 & \textbf{80.16}  & 1518 \\ \cline{4-8}
& & & Orig. & \textbf{0.13} & 0.001 & \textbf{91.67} & 1704 \\ \hline\hline
\multirow{2}{*}{12} & \multirow{2}{*}{15} & \multirow{2}{*}{1.25} &Prop. &\textbf{0.19} & 0.002 & \textbf{78.70}  & 1524 \\ \cline{4-8}
& & & Orig. & \textbf{0.17} & 0.002 & \textbf{90.53} & 1536 \\ \hline\hline
\multirow{2}{*}{14} & \multirow{2}{*}{18} & \multirow{2}{*}{1.28} &Prop. &\textbf{0.22} & 0.001 & \textbf{81.63}  & 1536 \\ \cline{4-8}
& & & Orig. & \textbf{0.21} & 0.001 & \textbf{86.00} & 1614 \\ \hline\hline
\multirow{2}{*}{16} & \multirow{2}{*}{21} & \multirow{2}{*}{1.31} &Prop. &\textbf{0.25} & 0.002 & \textbf{84.58}  & 1512 \\ \cline{4-8}
& & & Orig. & \textbf{0.26} & 0.001 & \textbf{81.87} & 1758 \\ \hline\hline
\multirow{2}{*}{18} & \multirow{2}{*}{24} & \multirow{2}{*}{1.33} &Prop. &\textbf{0.29} & 0.003 & \textbf{84.18}  & 1536 \\ \cline{4-8}
& & & Orig. & \textbf{0.30} & 0.002 & \textbf{79.68} & 1524 \\ \hline\hline
\multirow{2}{*}{20} & \multirow{2}{*}{27} & \multirow{2}{*}{1.35} &Prop. &\textbf{0.31} & 0.002 & \textbf{88.32}  & 1542 \\ \cline{4-8}
& & & Orig. & \textbf{0.36} & 0.001 & \textbf{75.74} & 1614 \\ \hline\hline
\multirow{2}{*}{22} & \multirow{2}{*}{30} & \multirow{2}{*}{1.36} &Prop. &\textbf{0.34} & 0.004 & \textbf{89.13}  & 1596 \\ \cline{4-8}
& & & Orig. & \textbf{0.42} & 0.002 & \textbf{70.99} & 1542 \\ \hline\hline
\multirow{2}{*}{24} & \multirow{2}{*}{33} & \multirow{2}{*}{1.38} &Prop. &\textbf{0.37} & 0.003 & \textbf{90.11}  & 1488 \\ \cline{4-8}
& & & Orig. & \textbf{0.49} & 0.007 & \textbf{68.03} & 1704 \\ \hline\hline
\multirow{2}{*}{26} & \multirow{2}{*}{36} & \multirow{2}{*}{1.38} &Prop. &\textbf{0.40} & 0.003 & \textbf{90.86}  & 1500 \\ \cline{4-8}
& & & Orig. & \textbf{0.56} & 0.001 & \textbf{64.49} & 1530 \\ \hline\hline
\multirow{2}{*}{28} & \multirow{2}{*}{39} & \multirow{2}{*}{1.39} &Prop. &\textbf{0.43} & 0.008 & \textbf{91.23}  & 1482 \\ \cline{4-8}
& & & Orig. & \textbf{0.63} & 0.003 & \textbf{61.69} & 1626 \\ \hline\hline
\multirow{2}{*}{30} & \multirow{2}{*}{42} & \multirow{2}{*}{1.40} &Prop. &\textbf{0.45} & 0.002 & \textbf{93.23}  & 1566 \\ \cline{4-8}
& & & Orig. & \textbf{0.72} & 0.001 & \textbf{58.20} & 1548 \\ \hline\hline
\multirow{2}{*}{32} & \multirow{2}{*}{45} & \multirow{2}{*}{1.41} &Prop. &\textbf{0.48} & 0.007 & \textbf{93.38}  & 1464 \\ \cline{4-8}
& & & Orig. & \textbf{0.81} & 0.002 & \textbf{55.35} & 1476 \\ \hline\hline
\multirow{2}{*}{34} & \multirow{2}{*}{48} & \multirow{2}{*}{1.41} &Prop. &\textbf{0.51} & 0.003 & \textbf{93.51}  & 1602 \\ \cline{4-8}
& & & Orig. & \textbf{0.91} & 0.001 & \textbf{52.67} & 1560 \\ \hline\hline
\multirow{2}{*}{36} & \multirow{2}{*}{51} & \multirow{2}{*}{1.42} &Prop. &\textbf{0.54} & 0.011 & \textbf{93.82}  & 1548 \\ \cline{4-8}
& & & Orig. & \textbf{1.01} & 0.002 & \textbf{50.42} & 1878 \\ \hline\hline
\multirow{2}{*}{38} & \multirow{2}{*}{54} & \multirow{2}{*}{1.42} &Prop. &\textbf{0.57} & 0.002 & \textbf{94.62}  & 1500 \\ \cline{4-8}
& & & Orig. & \textbf{1.13} & 0.002 & \textbf{47.85} & 1512 \\ \hline\hline
\multirow{2}{*}{40} & \multirow{2}{*}{57} & \multirow{2}{*}{1.42} &Prop. &\textbf{0.59} & 0.003 & \textbf{95.99}  & 1464 \\ \cline{4-8}
& & & Orig. & \textbf{1.25} & 0.003 & \textbf{45.58} & 1548 \\ \hline\hline
\multirow{2}{*}{42} & \multirow{2}{*}{59} & \multirow{2}{*}{1.40} &Prop. &\textbf{0.63} & 0.003 & \textbf{93.58}  & 1548 \\ \cline{4-8}
& & & Orig. & \textbf{1.39} & 0.014 & \textbf{42.58} & 1722 \\ \hline\hline
\multirow{2}{*}{44} & \multirow{2}{*}{62} & \multirow{2}{*}{1.41} &Prop. &\textbf{0.66} & 0.002 & \textbf{93.37}  & 1464 \\ \cline{4-8}
& & & Orig. & \textbf{1.54} & 0.051 & \textbf{40.27} & 2259 \\ \hline\hline
\multirow{2}{*}{46} & \multirow{2}{*}{65} & \multirow{2}{*}{1.41} &Prop. &\textbf{0.69} & 0.001 & \textbf{94.68}  & 1512 \\ \cline{4-8}
& & & Orig. & \textbf{1.66} & 0.002 & \textbf{39.27} & 1656 \\ \hline\hline
\multirow{2}{*}{48} & \multirow{2}{*}{68} & \multirow{2}{*}{1.42} &Prop. &\textbf{0.71} & 0.006 & \textbf{95.18}  & 1554 \\ \cline{4-8}
& & & Orig. & \textbf{1.80} & 0.006 & \textbf{37.68} & 1548 \\ \hline\hline
\multirow{2}{*}{50} & \multirow{2}{*}{71} & \multirow{2}{*}{1.42} &Prop. &\textbf{0.74} & 0.002 & \textbf{95.88}  & 1458 \\ \cline{4-8}
& & & Orig. & \textbf{1.97} & 0.013 & \textbf{35.98} & 1488 \\ \hline\hline
\multirow{2}{*}{52} & \multirow{2}{*}{74} & \multirow{2}{*}{1.42} &Prop. &\textbf{0.78} & 0.009 & \textbf{95.05}  & 1530 \\ \cline{4-8}
& & & Orig. & \textbf{2.15} & 0.055 & \textbf{34.45} & 1506 \\ \hline\hline
\multirow{2}{*}{54} & \multirow{2}{*}{77} & \multirow{2}{*}{1.42} &Prop. &\textbf{0.80} & 0.002 & \textbf{96.08}  & 1542 \\ \cline{4-8}
& & & Orig. & \textbf{2.30} & 0.002 & \textbf{33.49} & 1548 \\ \hline\hline
\multirow{2}{*}{56} & \multirow{2}{*}{80} & \multirow{2}{*}{1.43} &Prop. &\textbf{0.83} & 0.002 & \textbf{96.50}  & 1572 \\ \cline{4-8}
& & & Orig. & \textbf{2.47} & 0.003 & \textbf{32.35} & 1512 \\ \hline\hline
\multirow{2}{*}{58} & \multirow{2}{*}{83} & \multirow{2}{*}{1.43} &Prop. &\textbf{0.86} & 0.003 & \textbf{96.42}  & 1566 \\ \cline{4-8}
& & & Orig. & \textbf{2.66} & 0.002 & \textbf{31.19} & 1524 \\ \hline\hline
\multirow{2}{*}{60} & \multirow{2}{*}{86} & \multirow{2}{*}{1.43} &Prop. &\textbf{0.89} & 0.003 & \textbf{96.41}  & 1482 \\ \cline{4-8}
& & & Orig. & \textbf{2.85} & 0.002 & \textbf{30.16} & 1506 \\ \hline
\multirow{2}{*}{62} & \multirow{2}{*}{89} & \multirow{2}{*}{1.43} &Prop. &\textbf{0.92} & 0.003 & \textbf{96.33}  & 1614 \\ \cline{4-8}
& & & Orig. & \textbf{3.09} & 0.064 & \textbf{28.76} & 1488 \\ \hline
\arrayrulecolor{white} $\vdots$ & $\vdots$  & $\vdots$  & $\vdots$  & $\vdots$ & $\vdots$  & $\vdots$  &$\vdots$ \\ \hline \hline
\arrayrulecolor{black}\hline
\multirow{2}{*}{$\infty$} & \multirow{2}{*}{\scriptsize${\Theta(N)}$} & \multirow{2}{*}{1.5} &Prop. &{\scriptsize$\Theta(N)$} & 0 & $3/(2\kappa)$ & $\infty$ \\ \cline{4-8}
& & & Orig. & {\scriptsize$\Theta(N^2)$} & 0 & $0$ & $\infty$ \\ \hline
\end{tabular}
}
\end{table}

\bibliographystyle{IEEEtran}
\bibliography{access}  %

\begin{IEEEbiography}[{\includegraphics[width=1in,height=1.25in,clip,keepaspectratio]{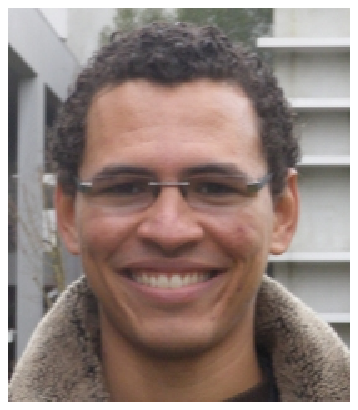}}]{Saulo Queiroz} 
is a permanent lecturer at the Department of Computer Science of 
the Federal University of Technology (UTFPR) in Brazil and Ph.D candidate at the University of Coimbra (Portugal). He completed the B.Sc. (2006) and M.Sc. (2009)  degrees at the  Federal University of Amazonas (Brazil)  with focus on the efficiency of networking algorithms.  Over the last decade, he has lectured disciplines on computer science such as design and analysis of algorithms and signal communication processing. With his research on networking, he has contributed with open source initiatives such as Google Summer of Code. His current research interest comprises the design and analysis of signal communication algorithms.
 \end{IEEEbiography}
\begin{IEEEbiography}[{\includegraphics[width=1in,height=1.25in,clip,keepaspectratio]{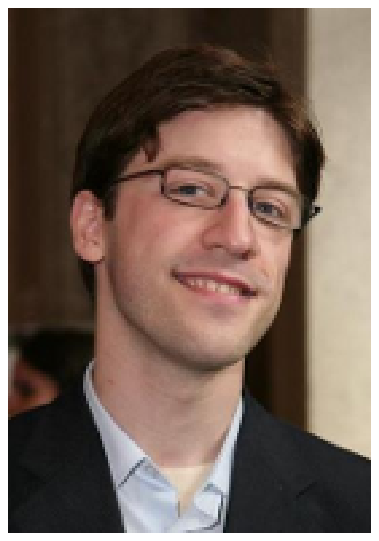}}]{Jo\~ao P. Vilela} 
is a professor at the Department of Computer Science of the University of Porto. He was previously a professor at the Department of Informatics Engineering of the University of Coimbra, after receiving the Ph.D. in Computer Science in 2011 from the University of Porto, Portugal. He was a visiting researcher at the Coding, Communications and Information Theory group at Georgia Tech, working on physical-layer security, and the Network Coding and Reliable Communications group at MIT, working on security for network coding. In recent years, Dr. Vilela has been coordinator and team member of several national, bilateral, and European-funded projects in security and privacy of computer and communication systems, with focus on wireless networks, Internet of Things and mobile devices.
\end{IEEEbiography}
\begin{IEEEbiography}[{\includegraphics[width=1in,height=1.25in,clip,keepaspectratio]{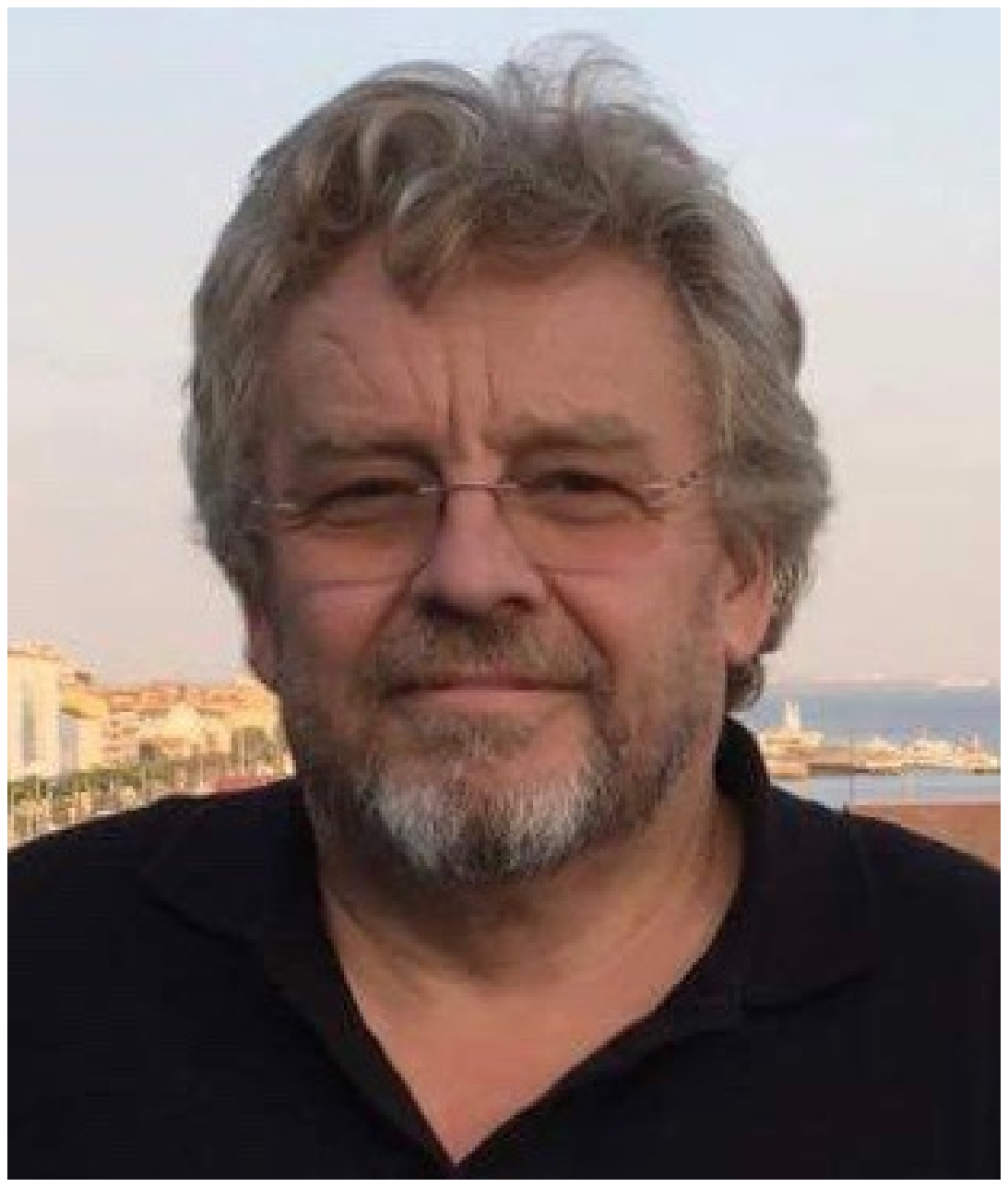}}]{Edmundo Monteiro}
is currently a Full Professor with the University of Coimbra, Portugal.
He has more than 30 years of research experience in the field of computer communications,
wireless networks, quality of service and experience, network and service management, and
computer and network security. He participated in many Portuguese, European, and international
research projects and initiatives. His publication
list includes over 200 publications in journals,
books, and international refereed conferences. He has co-authored nine
international patents. He is a member of the Editorial Board of Wireless
Networks (Springer) journal and is involved in the organization of many
national and international conferences and workshops. He is also a Senior
Member of the IEEE Communications Society and the ACM Special Interest
Group on Communications. He is also a Portuguese Representative in IFIP
TC6 (Communication Systems).
\end{IEEEbiography}

\EOD
\end{document}